\newtheorem{dingli}{Theorem}
\newtheorem{lemma}{Lemma}
\newtheorem{remark}{Remark}
\begin{document}

\title{\fontsize{21pt}{21pt}\selectfont{Low complexity sum rate maximization for single and multiple stream MIMO AF relay networks}}



\author{Cong~Sun,~\IEEEmembership{Student Member,~IEEE,}
        and~Eduard~Jorswieck,~\IEEEmembership{Senior Member,~IEEE}
\thanks{C. Sun is with the State Key Lab.\ of Scientific and Engineering Computing, ICMSEC, AMSS,
Chinese Academy of Sciences, Beijing, 100190,
China, e-mail: (suncong@lsec.cc.ac.cn).}
\thanks{E. Jorswieck is with Communications Theory, Communications Laboratory, Dresden University of Technology, Dresden, 01062, Germany, e-mail: (Eduard.Jorswieck@tu-dresden.de).}}

\thispagestyle{empty}
\maketitle \vspace{-2cm}

\thispagestyle{empty}
\begin{abstract}
A multiple-antenna amplify-and-forward two-hop interference network with multiple links and multiple relays is considered. We optimize transmit precoders, receive decoders and relay AF matrices to maximize the achievable sum rate. Under per user and total relay sum power constraints, we propose an efficient algorithm to maximize the total signal to total interference plus noise ratio (TSTINR). Computational complexity analysis shows that our proposed algorithm for TSTINR has lower complexity than the existing weighted minimum mean square error (WMMSE) algorithm. We analyze and confirm by simulations that the TSTINR, WMMSE and the total leakage interference plus noise (TLIN) minimization models with per user and total relay sum power constraints can only transmit a single data stream for each user. Thus we propose a novel multiple stream TSTINR model with requirement of orthogonal columns for precoders, in order to support multiple data streams and thus utilize higher Degrees of Freedom. Multiple data streams and larger multiplexing gains are guaranteed. Simulation results show that for single stream models, our TSTINR algorithm outperforms the TLIN algorithm generally and outperforms WMMSE in medium to high Signal-to-Noise-Ratio scenarios; the system sum rate significantly benefits from multiple data streams in medium to high SNR scenarios.
\end{abstract}
\begin{IEEEkeywords}
MIMO AF relay network, sum rate maximization, total signal to total interference plus noise ratio, alternating iteration
\end{IEEEkeywords}

\newpage
\pagestyle{headings}
\setcounter{page}{1}
\section{Introduction}
Relays are often used to aid communications, not only to improve the Quality of Service (QoS) of the user pairs, which have weak direct links due to poor channel conditions, but also to increase the multiplexing gain of the network \cite{Madsen}. Among various relay transmit schemes,
the most effective ones are Amplify-and-Forward (AF), Compute-and-Forward (CF) and Decode-and-Forward (DF). Especially AF protocol is standardized as layer 1 relaying \cite{Iwamura}, and thus in popular research,
because of its simplicity and low complexity.

In this paper we consider the multiple link multiple relay network with non-regenerative relaying. There has been many works discussing the optimization of the relay beamforming weights. For single antenna case, \cite{Fazeli} and \cite{Sun} study models to solve
the optimal relay AF weights, where total relay transmit power is minimized under guaranteed Signal-to-Interference-plus-Noise-Ratio (SINR) requirements. There is also extension to multiple antenna case. In \cite{Li} the authors explore
the network with one multiple-antenna relay, and according to various relay AF matrix schemes, proposes ``IRC FlexCoBF" algorithm for transmit
beamforming matrices. For the networks with one user pair and parallel relays, \cite{Toding} discusses the joint optimization of source and relay
beamforming with different receiver filters. With the similar MIMO relay network as \cite{Toding}, \cite{Liu} investigates the optimal joint source and relay power allocation to maximize the end-to-end achievable rate. Extended to one transmitter, multiple receiver and multiple relay network, \cite{Choi} proposes a weighted mean square error minimization (WMMSE) model to solve the source and relay beamforming matrices with MMSE receiving filter. Recent work of \cite{Truong} is based on general MIMO
AF relay networks with multiple links and multiple relays. The authors provide algorithms to jointly optimize users' precoders, decoders and the relay AF
matrices. Total leakage interference plus noise
(TLIN) minimization and WMMSE models are proposed, both with per user and total relay transmit power constraints. The idea to construct the WMMSE models in \cite{Choi} and \cite{Truong} are similar. The WMMSE model in \cite{Truong} is also extended to that with individual user and individual relay power constraints. The precoders, decoders and the relay AF matrices
are solved alternatively, where each subproblem can achieve its optimal solution. However the algorithm has quite high computational complexity. Here we propose different approaches to approximate the system sum rate and derive lower complexity algorithm to solve the corresponding optimization problem.

The Degrees of Freedom (DoFs) of one network is closely related to its channel capacity. In high Signal-to-Noise-Ratio (SNR) scenarios, the capacity
increases linearly with the number of DoFs. The authors in \cite{Cadambe} propose a new technique of
Interference Alignment (IA), to maximize the achievable DoFs for MIMO networks. Such technique optimizes the precoders and decoders, in order to eliminate the network interference and approach the capacity of MIMO network. In MIMO networks, the IA technique has been deeply investigated \cite{Yetis,Razaviyayn,Gomadam2,Sun3}. It is shown in \cite{Wagner} that, with relays the achievable DoFs of the MIMO interference network are increased, and the capacity as well as the reliability are improved. In two-hop networks, \cite{Ning} studies the feasibility conditions and the algorithms for relay aided IA, restricted on single antenna case. Aiming to achieve the maximum DoFs of the $2\times2\times2$ MIMO relay network, \cite{Gou} and \cite{Vaze} study similar technique of aligned interference neutralization
to explore the optimal transmission scheme for single antenna and multiple antennas cases, respectively. \cite{Jeon1} investigates the ergodic capacity of a class of fading 2-user 2-hop networks with interference neutralization technique. In \cite{Gastpar} the maximum achievable DoFs for different kinds of MIMO interference channels and MIMO multiple hop networks are
listed and concluded. For the general $K\times R\times K$ MIMO relay network, the maximum DoFs are only analyzed with restriction to the number of relays. Interestingly, we observe by simulations
that the algorithms proposed in \cite{Truong} all lead to precoders with linearly dependent columns, which result in single transmit data stream corresponding to one DoF for each user, regardless of
the number of antennas at relay and user nodes. This is an impetus for us to propose multiple data stream models.

In our paper, we propose several models for the general MIMO relay network, according to different purposes and situations. The general transmit process and
system model are introduced in Section \MakeUppercase{\romannumeral 2}. In Section \MakeUppercase{\romannumeral 3} we set up a Total Signal
to Total Interference plus Noise Ratio (TSTINR) maximization model to approximate the system sum rate, with per user and total relay transmit power
constraints. Then this TSTINR model as well as the TLIN and WMMSE model in \cite{Truong} are extended to those with individual user and individual relay power
constraints. Also, the computational complexity of our algorithm is analyzed and compared with the WMMSE algorithm in \cite{Truong}. Our proposed algorithm is
shown to have lower complexity. Furthermore, to achieve more than one data streams for each user, we propose a
multiple stream TSTINR model in Section \MakeUppercase{\romannumeral 4}. Compared to the TSTINR model in Section \MakeUppercase{\romannumeral 3},
additional orthogonal constraints for precoders are added. In all the models, the precoding matrices, decoding matrices and relay beamforming matrices are
iterated alternatively. Each subproblem is efficiently solved, with sufficient reduction of the objective function in each iteration guaranteed. We provide simulation results in Section
\MakeUppercase{\romannumeral 5}. Since the network model in \cite{Truong} is more general than \cite{Choi}, we compare our proposed algorithms with those in \cite{Truong}. The results indicate that TSTINR outperforms TLIN generally, and achieves higher sum rate than
WMMSE in medium to high SNR scenarios, for the single stream cases. The system sum rate benefits much from the multiple stream model in medium and high SNR scenarios. Parts of our work are reported in \cite{Sun2} and \cite{Sun4}. Compared to them, we add more details of the proposed algorithms and provide detailed proof for all the mentioned theorems. Furthermore, we analyze and compare the detailed computational complexity of our proposed algorithm and the WMMSE algorithm from \cite{Truong}.

\emph{Notation}: Lowercase and uppercase boldface represent vectors
and matrices, respectively. $\mathbb{C}$ represents the complex
domain. $Re(a)$ means the real part of scalar $a$.
$\textrm{tr}(\mathbf{A})$ and $\|\mathbf{A}\|_{F}$ are the trace and
the Frobenius norm of matrix $\mathbf{A}$, respectively.
$\mathbf{I}_{d}$ represents the $d\times d$ identity matrix.
$\mathcal {K}$ and $\mathcal {R}$ represent the set of the user indices $\{1,2,\ldots,
K\}$ and that of relay indices $\{1,2,\ldots,
R\}$, respectively. And we use
$\mathbb{E}(\cdot)$ to denote the statistical expectation. $O(n)$ means the same order amount of
$n$. $\nu_{\min}^{d}(\mathbf{A})$ is composed of the eigenvectors of $\mathbf{A}$ corresponding to its $d$ smallest eigenvalues.

\section{System model}
Consider a two-hop interference channel consisting of $K$ user pairs and $R$ relays as in Fig. 1. Transmitter $k$, Receiver $k$ and Relay $r$ are equipped
with $M_{k}$, $N_{k}$ and $L_{r}$ antennas, respectively, for any $k\in\mathcal {K}$, $r\in\mathcal {R}$. User $k$ wishes to transmit $d_{k}$ parallel data streams. $\mathbf{s}_{k}\in \mathbb{C}^{d_{k}\times
1}$ denotes the transmit signal vector of User $k$, where $\mathbb{E}(\mathbf{s}_{k}\mathbf{s}_{k}^{H})=\mathbf{I}_{d_{k}}$. Due to the poor
channel conditions between user pairs, there is no direct links among users. Low-complex relays aid to
communicate and the AF transmit protocol is used. Here we assume perfect channel state
information (CSI) is available at a central controller.
\begin{figure}[htb]

\begin{minipage}[b]{1.0\linewidth}
  \centering
  \centerline{\includegraphics[width=12cm]{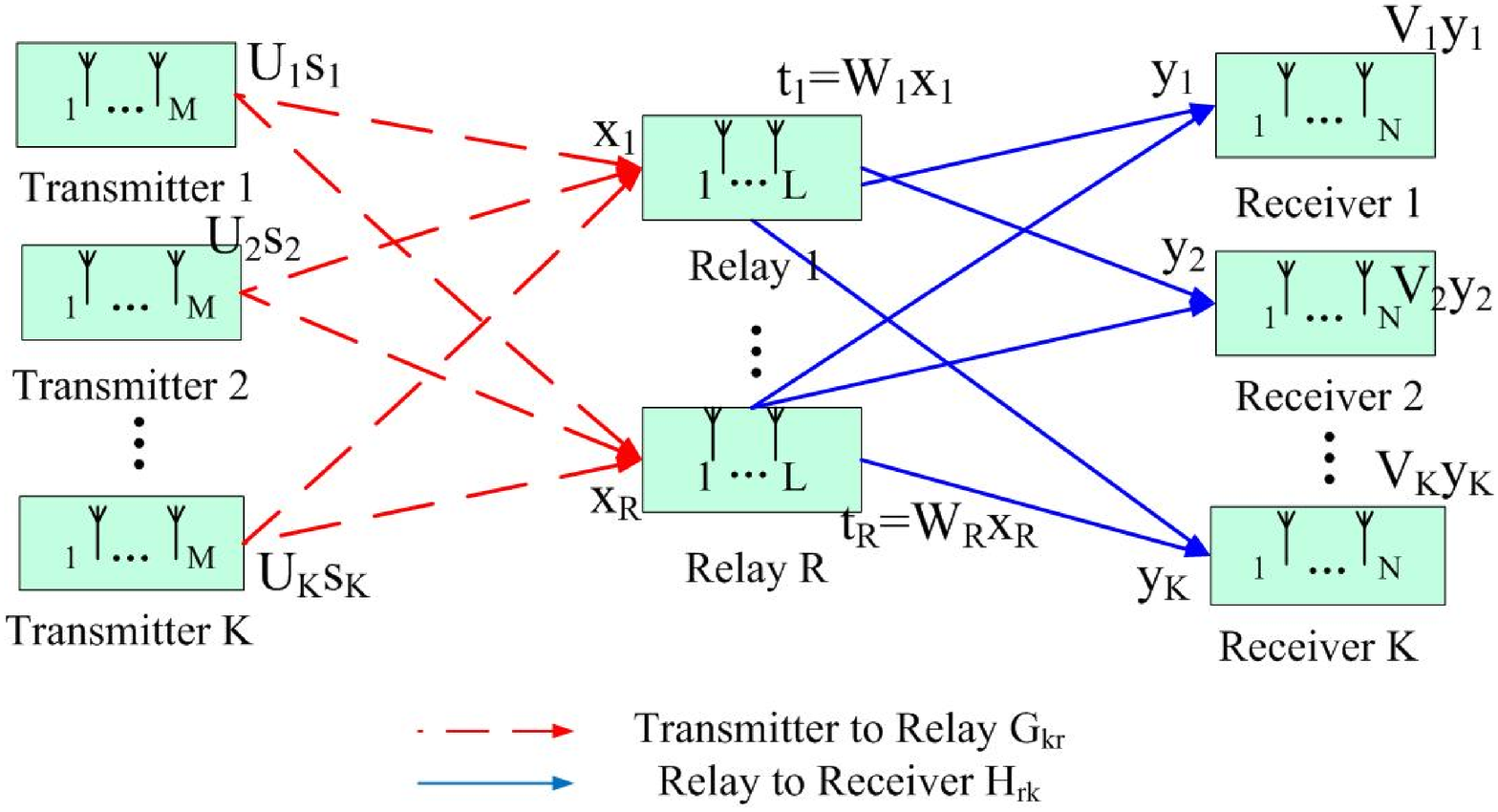}}
  \center{Fig. 1 MIMO relay AF network}\medskip
\end{minipage}
\end{figure}

Transmission process includes two time slots. In the first time slot, all sources transmit signals to all relays.
Relay $r$ receives $\mathbf{x}_{r}=\sum_{k\in\mathcal
{K}}\mathbf{G}_{rk}\mathbf{U}_{k}\mathbf{s}_{k}+\mathbf{n}_{r}$,
for all $r\in\mathcal{R}$, where $\mathbf{U}_{k}\in \mathbb{C}^{M_{k}\times
d_{k}}$ is the precoding matrix of User $k$, $\mathbf{G}_{rk}\in \mathbb{C}^{L_{r}\times
M_{k}}$ is the channel coefficient between the Transmitter $k$ and Relay $r$, and $\mathbf{n}_{r}$ with zero mean and variance matrix
$\sigma_{1}^{2}\mathbf{I}_{L_{r}}$ is the noise at Relay $r$. In the second time slot, by the AF protocol all relays broadcast to all
destinations $\mathbf{t}_{r}=\mathbf{W}_{r}\mathbf{x}_{r}$,
for all $r\in\mathcal {R}$,
where $\mathbf{W}_{r}\in \mathbb{C}^{L_{r}\times
L_{r}}$ is the beamforming matrix of Relay $r$.

Receiver $k$ observes:
$$
\mathbf{y}_{k}=\sum_{r\in\mathcal
{R}}\mathbf{H}_{kr}\mathbf{t}_{r}+\mathbf{z}_{k},
$$
for all $k\in\mathcal {K}$, where $\mathbf{H}_{kr}\in \mathbb{C}^{N_{k}\times
L_{r}}$ is the channel coefficient between Relay $r$ and Receiver $k$, and $\mathbf{z}_{k}$ with zero mean and variance matrix $\sigma_{2}^{2}\mathbf{I}_{N_{k}}$ is
the noise at Receiver $k$. Multiplying the decoding matrix $\mathbf{V}_{k}\in \mathbb{C}^{N_{k}\times
d_{k}}$, Receiver $k$ obtains:
\begin{eqnarray} \label{eq:eq5}
\tilde{\mathbf{y}}_{k}=\underbrace{\mathbf{V}_{k}^{H}\mathbf{T}_{kk}\mathbf{s}_{k}}_{\textrm{desired
signal}}+\underbrace{\sum_{q\in\mathcal {K},q\neq k}\mathbf{V}_{k}^{H}\mathbf{T}_{kq}\mathbf{s}_{q}}_{\textrm{interference}}+\underbrace{\sum_{r\in\mathcal {R}}\mathbf{V}_{k}^{H}\mathbf{H}_{kr}\mathbf{W}_{r}\mathbf{n}_{r}+\mathbf{V}_{k}^{H}\mathbf{z}_{k}}_{\textrm{noise}}.
\end{eqnarray}
The right hand side of (\ref{eq:eq5}) contains three terms: the desired
signal, the interference from other users and the noise including relay enhanced noise and the local noise. The effective channel from Transmitter $k$ to
Receiver $q$ is given by $\mathbf{T}_{kq}=\sum_{r\in\mathcal {R}}\mathbf{H}_{kr}\mathbf{W}_{r}\mathbf{G}_{rq}\mathbf{U}_{q}$. Suppose all the transmit
signals and noise in the system are independent of each other. The transmit powers at each user and each relay are, respectively:
\begin{eqnarray*}
P_{k}^{T}=\mathbb{E}(\|\mathbf{U}_{k}\mathbf{s}_{k}\|_{F}^{2})=\textrm{tr}(\mathbf{U}_{k}^{H}\mathbf{U}_{k}), k\in\mathcal {K},\\
P_{r}^{R}=\mathbb{E}(\|\mathbf{t}_{r}\|_{F}^{2})=\sum_{k\in\mathcal {K}}\|\mathbf{W}_{r}\mathbf{G}_{rk}\mathbf{U}_{k}\|_{F}^{2}
+\sigma_{1}^{2}\|\mathbf{W}_{r}\|_{F}^{2}, r\in\mathcal {R}.
\end{eqnarray*}
Then the total relay transmit power is $P^{R}=\sum_{r\in\mathcal {R}}P_{r}^{R}$.

In the following two sections we propose different models with corresponding algorithms to obtain efficient system precoders, decoders and relay
beamforming matrices. For the sake of expression simplicity, we predefine some symbols here: precoded and decoded effective channel from Transmitter $k$ to Relay $r$ as $\bar{\mathbf{G}}_{rk}=\mathbf{G}_{rk}\mathbf{U}_{k}$ and $\bar{\mathbf{W}}_{rk}=\mathbf{W}_{r}\mathbf{G}_{rk}$, respectively; precoded and decoded effective channel from Relay $r$ to Receiver $k$
$\bar{\mathbf{H}}_{kr}=\mathbf{H}_{kr}\mathbf{W}_{r}$ and $\bar{\mathbf{V}}_{kr}=\mathbf{V}_{k}^{H}\mathbf{H}_{kr}$, respectively
, $k\in\mathcal {K}, r\in\mathcal {R}$.

\section{Total Signal to Total Interference plus Noise Ratio model}
In this section, we develop a new model to approximate sum rate maximization. A low complexity algorithm to optimize the users' precoders, decoders and relay beamforming matrices is proposed. And its computational complexity is analyzed.

\subsection{A new model with per user and total relay power constraints} \label{sec:sec2}
First, we set up the new model of maximizing TSTINR with per user and total relay transmit
power constraints.

\subsubsection{Optimization problem formulation}
Define $\textrm{TSTINR}=\frac{P^{S}}{P^{I}+P^{N}}=\frac{\sum_{k\in\mathcal {K}}P_{k}^{S}}{\sum_{k\in\mathcal {K}}(P_{k}^{I}+P_{k}^{N})}$, where
\begin{eqnarray}
P_{k}^{S}&=&\mathbb{E}(\|\mathbf{V}_{k}^{H}\mathbf{T}_{kk}\mathbf{s}_{k}\|_{F}^{2})=\|\mathbf{V}_{k}^{H}\sum_{r\in\mathcal {R}}\mathbf{H}_{kr}\mathbf{W}_{r}\mathbf{G}_{rk}\mathbf{U}_{k}\|_{F}^{2},\\
P_{k}^{I}&=&\mathbb{E}(\|\sum_{q\in\mathcal {K},q\neq k}\mathbf{V}_{k}^{H}\mathbf{T}_{kq}\mathbf{s}_{q}\|_{F}^{2})=\sum_{q\in\mathcal {K},q\neq k}\|\mathbf{V}_{k}^{H}\sum_{r\in\mathcal {R}}\mathbf{H}_{kr}\mathbf{W}_{r}\mathbf{G}_{rq}\mathbf{U}_{q}\|_{F}^{2},\\
P_{k}^{N}&=&\mathbb{E}(\|\sum_{r\in\mathcal {R}}\mathbf{V}_{k}^{H}\mathbf{H}_{kr}\mathbf{W}_{r}\mathbf{n}_{r}
+\mathbf{V}_{k}^{H}\mathbf{z}_{k}\|_{F}^{2})=\sigma_{1}^{2}\sum_{r\in\mathcal {R}}\|\mathbf{V}_{k}^{H}\mathbf{H}_{kr}\mathbf{W}_{r}\|_{F}^{2}+\sigma_{2}^{2}\|\mathbf{V}_{k}\|_{F}^{2}
\end{eqnarray}
are the desired signal power, the leakage interference and the noise power at Receiver $k$, respectively.

We wish to maximize the system sum rate

\vspace{-1cm}
\begin{eqnarray} \label{eq:eq3}
R_{\textrm{sum}}=\frac{1}{2}\sum_{k\in\mathcal{K}}\textrm{log}_{2}\textrm{det}(\mathbf{I}_{N_{k}}+\mathbf{F}_{k}^{-1}\mathbf{T}_{kk}\mathbf{T}_{kk}^{H})
\end{eqnarray}
with $\mathbf{F}_{k}=\sum_{q\neq k,q\in\mathcal{K}}\mathbf{T}_{kq}\mathbf{T}_{kq}^{H}+\sum_{r\in\mathcal {R}}\bar{\mathbf{H}}_{kr}\bar{\mathbf{H}}_{kr}^{H}
+\sigma_{2}^{2}\mathbf{I}_{N_{k}}$. The direct optimization of the system sum rate is complicated. Therefore, we approximate it by the TSTINR and maximize the TSTINR instead. As the TSTINR remains
invariant with $\mathbf{V}_{k}$ replaced by $\mathbf{V}_{k}\mathbf{Q}$,
 where $\mathbf{Q}$ is any $d$-dimensional unitary matrix, we require the decoders $\mathbf{V}_{k}, k\in\mathcal {K}$ to be orthogonal, as the bases of the
 $d$-dimensional solution subspaces. Also, the following theorem holds:
\begin{dingli} \label{th:th1}
For any precoder $\mathbf{U}_{k},k\in\mathcal {K}$, relay beamforming matrix $\mathbf{W}_{r},r\in\mathcal {R}$ and any decoder $\mathbf{V}_{k}$ satisfying $\mathbf{V}_{k}^{H}\mathbf{V}_{k}=\mathbf{I}_{d_{k}}, k\in\mathcal {K}$, we have $\textrm{log}_{2}[1+\textrm{TSTINR}(\{\mathbf{U}\},\{\mathbf{V}\},\{\mathbf{W}\})]\leq R_{\textrm{sum}}(\{\mathbf{U}\},\{\mathbf{W}\})$.
\end{dingli}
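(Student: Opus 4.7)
The plan is to reduce $\log_2(1+\mathrm{TSTINR})$ to a sum over users by a three-step chain of elementary matrix inequalities. The key enabling identity uses $\mathbf{V}_k^H\mathbf{V}_k=\mathbf{I}_{d_k}$ together with the cyclic trace property:
$$
P_k^S=\mathrm{tr}(\mathbf{A}_k),\qquad P_k^I+P_k^N=\mathrm{tr}(\mathbf{B}_k),
$$
where $\mathbf{A}_k:=\mathbf{V}_k^H\mathbf{T}_{kk}\mathbf{T}_{kk}^H\mathbf{V}_k$ and $\mathbf{B}_k:=\mathbf{V}_k^H\mathbf{F}_k\mathbf{V}_k$, the latter being positive definite thanks to the $\sigma_2^2\mathbf{I}_{N_k}$ term in $\mathbf{F}_k$. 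Thus $\mathrm{TSTINR}=\sum_k\mathrm{tr}(\mathbf{A}_k)/\sum_k\mathrm{tr}(\mathbf{B}_k)$, and it suffices to bound this scalar ratio by the sum of single-user capacities $\sum_k\log_2\det(\mathbf{I}_{N_k}+\mathbf{F}_k^{-1}\mathbf{T}_{kk}\mathbf{T}_{kk}^H)$.

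The per-user core I want to establish is the chain
$$
1+\frac{\mathrm{tr}(\mathbf{A}_k)}{\mathrm{tr}(\mathbf{B}_k)}\;\leq\;\det(\mathbf{I}_{d_k}+\mathbf{B}_k^{-1}\mathbf{A}_k)\;\leq\;\det(\mathbf{I}_{N_k}+\mathbf{F}_k^{-1}\mathbf{T}_{kk}\mathbf{T}_{kk}^H).
$$
For the first inequality, I would split it into (i) $\mathrm{tr}(\mathbf{A}_k)/\mathrm{tr}(\mathbf{B}_k)\leq\mathrm{tr}(\mathbf{B}_k^{-1}\mathbf{A}_k)$, obtained by diagonalising $\mathbf{B}_k=\mathbf{U}\mathbf{D}\mathbf{U}^H$ and applying the elementary bound $\sum_ic_i/d_i\geq(\sum_ic_i)/(\sum_id_i)$ to the non-negative diagonal $c_i=(\mathbf{U}^H\mathbf{A}_k\mathbf{U})_{ii}$ and positive $d_i$; and (ii) $1+\mathrm{tr}(\mathbf{M})\leq\det(\mathbf{I}+\mathbf{M})$ for PSD $\mathbf{M}$, applied to $\mathbf{M}=\mathbf{B}_k^{-1/2}\mathbf{A}_k\mathbf{B}_k^{-1/2}$, which follows by expanding $\prod_i(1+\mu_i)$ in its non-negative eigenvalues. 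For the second inequality, I would set $\mathbf{M}_k:=\mathbf{F}_k^{-1/2}\mathbf{T}_{kk}\mathbf{T}_{kk}^H\mathbf{F}_k^{-1/2}$ and $\mathbf{Q}_k:=\mathbf{F}_k^{1/2}\mathbf{V}_k(\mathbf{V}_k^H\mathbf{F}_k\mathbf{V}_k)^{-1/2}$, observe that $\mathbf{Q}_k$ has orthonormal columns and that $\mathbf{B}_k^{-1}\mathbf{A}_k$ is similar to $\mathbf{Q}_k^H\mathbf{M}_k\mathbf{Q}_k$, and then invoke Cauchy interlacing to obtain termwise $\prod_{i=1}^{d_k}(1+\mu_i)\leq\prod_{i=1}^{N_k}(1+\gamma_i)$, where $\mu_i,\gamma_i$ are the eigenvalues of $\mathbf{Q}_k^H\mathbf{M}_k\mathbf{Q}_k$ and $\mathbf{M}_k$ respectively. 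Information-theoretically, this last step is nothing but the data-processing inequality applied to the linear receive front-end $\mathbf{V}_k^H$ acting on a Gaussian channel, which I would mention as a cross-check.

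To aggregate, I combine the scalar facts $\mathrm{TSTINR}=\sum_ka_k/\sum_kb_k\leq\max_ka_k/b_k\leq\sum_ka_k/b_k$ (valid for $a_k\geq 0,\,b_k>0$, with $a_k=\mathrm{tr}(\mathbf{A}_k)$, $b_k=\mathrm{tr}(\mathbf{B}_k)$) and the submultiplicative bound $1+\sum_kx_k\leq\prod_k(1+x_k)$ for $x_k\geq 0$, yielding
$$
\log_2(1+\mathrm{TSTINR})\;\leq\;\sum_k\log_2\!\Bigl(1+\frac{\mathrm{tr}(\mathbf{A}_k)}{\mathrm{tr}(\mathbf{B}_k)}\Bigr),
$$
and chaining with the per-user determinant bound above produces the stated inequality, up to reconciling the two-time-slot $\tfrac12$ prefactor that appears in $R_{\mathrm{sum}}$. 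I expect the main obstacle to be the non-unitary compression step (the $\mathbf{Q}_k$ reduction together with Cauchy interlacing), since it is the only place where the orthonormality hypothesis on $\mathbf{V}_k$ is genuinely used; the remaining ingredients are standard scalar and spectral inequalities.
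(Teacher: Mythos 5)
Your proposal is correct, and at the level of intermediate quantities it traces exactly the same chain as the paper's Appendix-A proof: compress everything onto the column space of $\mathbf{V}_k$, pass from a trace ratio to a determinant ratio per user, undo the compression to reach $\det(\mathbf{I}_{N_k}+\mathbf{F}_k^{-1}\mathbf{T}_{kk}\mathbf{T}_{kk}^{H})$, and finally aggregate the per-user ratios against the single global ratio defining TSTINR. What differs is how each link is justified. The paper leans on two cited textbook results: a block-determinant superadditivity inequality (its Lemma~1) to show $\det(\mathbf{B}_k^{-1}\mathbf{C}_k)\geq\det[(\mathbf{V}_k^{H}\mathbf{B}_k\mathbf{V}_k)^{-1}(\mathbf{V}_k^{H}\mathbf{C}_k\mathbf{V}_k)]$ via the unitary completion $[\mathbf{V}_k,\mathbf{V}_{k\perp}]$, and a det-ratio-versus-trace-ratio inequality for $\mathbf{C}\succeq\mathbf{B}\succeq0$ (its Lemma~2). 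You instead prove both links from scratch: Cauchy interlacing applied to the compression $\mathbf{Q}_k^{H}\mathbf{M}_k\mathbf{Q}_k$ replaces Lemma~1, and the pair of elementary facts $\mathrm{tr}(\mathbf{A})/\mathrm{tr}(\mathbf{B})\leq\mathrm{tr}(\mathbf{B}^{-1}\mathbf{A})$ and $1+\mathrm{tr}(\mathbf{M})\leq\det(\mathbf{I}+\mathbf{M})$ replaces Lemma~2; your aggregation via the mediant inequality plus $1+\sum_k x_k\leq\prod_k(1+x_k)$ is an equivalent rearrangement of the paper's $\bigl(\prod_k t_k\bigr)\sum_k b_k\geq\sum_k t_k b_k$ trick. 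Your version is more self-contained and makes visible where the orthonormality of $\mathbf{V}_k$ is actually used (the interlacing step); the paper's version is shorter given the citations. One caveat you correctly flagged but should not treat as your problem: both arguments establish the bound against $\sum_k\log_2\det(\mathbf{I}_{N_k}+\mathbf{F}_k^{-1}\mathbf{T}_{kk}\mathbf{T}_{kk}^{H})$, whereas $R_{\mathrm{sum}}$ as defined in~(\ref{eq:eq3}) carries a prefactor $\tfrac12$; the appendix silently drops that factor when it writes $R_{\mathrm{sum}}=\sum_k\log_2\det(\mathbf{B}_k^{-1}\mathbf{C}_k)$, so the theorem as literally stated is only proven up to that factor by the paper as well.
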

The detailed proof is shown in Appendix-A. This states that the result from maximizing TSTINR provides a guaranteed system throughput.
Besides
the orthogonality constraints of decoders, we add fixed transmit power constraints\footnote{Fixed power constraints mean that all the power constraints are equality constraints, which are called constraints without power control in \cite{Truong}.} for per user and total relay. Then the corresponding
 optimization problem is:

\vspace{-0.8cm}
 \begin{subequations} \label{eq:eq4}
\begin{eqnarray}
\hspace{-0.8cm}&\displaystyle\max_{\substack{\{\mathbf{U}\},\{\mathbf{V}\},\\\{\mathbf{W}\}}}& \textrm{TSTINR}=
\frac{\sum_{k\in\mathcal {K}}P_{k}^{S}}{\sum_{k\in\mathcal {K}}(P_{k}^{I}+P_{k}^{N})}\label{eq:eq4.3}\\[-0.2cm]
\hspace{-0.8cm}&\textrm{s.t.}& \mathbf{V}_{k}^{H}\mathbf{V}_{k}=\mathbf{I}_{d_{k}},\\[-0.2cm]
\hspace{-0.8cm}&&\|\mathbf{U}_{k}\|_{F}^{2}=p_{0}^{T}, k\in\mathcal {K},\label{eq:eq4.2}\\[-0.2cm]
\hspace{-0.8cm}&&\sum_{r\in\mathcal {R}}\!(\sum_{k\in\mathcal {K}}\!\|\mathbf{W}_{r}\mathbf{G}_{rk}\mathbf{U}_{k}\|_{F}^{2}
+\sigma_{1}^{2}\|\mathbf{W}_{r}\|_{F}^{2})=p_{\max}^{R}.\label{eq:eq4.1}
\end{eqnarray}
\end{subequations}

\subsubsection{Problem reformulation}
There is a lack of efficient methods to deal with (\ref{eq:eq4.3}) because it is a fraction. This makes problem (\ref{eq:eq4}) difficult to solve. Stimulated by
Dinkelbach's work \cite{Dinkelbach} for nonlinear fraction optimization problem on convex sets, we use a parameter $C$ to combine the denominator and the
numerator as the new objective function, whereas the conclusions in \cite{Dinkelbach} cannot be extended to the problem (\ref{eq:eq4}) with nonconvex feasible set. Reformulate (\ref{eq:eq4}) as follows:
\begin{subequations} \label{eq:eq6}
\begin{eqnarray}
\hspace{-0.8cm}&\displaystyle\min_{\substack{\{\mathbf{U}\},\{\mathbf{V}\},\\\{\mathbf{W}\}}}& f(\{\mathbf{U}\},\{\mathbf{V}\},\{\mathbf{W}\};C)
=C(P^{I}+P^{N})-P^{S}=\sum_{k\in\mathcal {K}}[C(P_{k}^{I}+P_{k}^{N})-P_{k}^{S}]\label{eq:eq6.1}\\
\hspace{-0.8cm}&\textrm{s.t.}& \mathbf{V}_{k}^{H}\mathbf{V}_{k}=\mathbf{I}_{d_{k}},\label{eq:eq6.2}\\
\hspace{-0.8cm}&&\textrm{tr}(\mathbf{U}_{k}^{H}\mathbf{U}_{k})=p_{0}^{T}, k\in\mathcal {K},\label{eq:eq6.3}\\
\hspace{-0.9cm}&&\sum_{r\in\mathcal {R}}(\sum_{k\in\mathcal {K}}\|\mathbf{W}_{r}\mathbf{G}_{rk}\mathbf{U}_{k}\|_{F}^{2}
+\sigma_{1}^{2}\|\mathbf{W}_{r}\|_{F}^{2})=p_{\max}^{R}.\label{eq:eq6.4}
\end{eqnarray}
\end{subequations}
Thus in each iteration we solve (\ref{eq:eq6}), and then update the parameter $C$ as follows: initially $C$ is set as a small positive scalar
 (for example $C=1$), then after each iteration it is updated as
\begin{eqnarray} \label{eq:eq33}
C=\frac{P^{S}(\{\mathbf{U}\},\{\mathbf{V}\},\{\mathbf{W}\})}
{P^{I}(\{\mathbf{U}\},\{\mathbf{V}\},\{\mathbf{W}\})+P^{N}(\{\mathbf{U}\},\{\mathbf{V}\},\{\mathbf{W}\})}.
\end{eqnarray}

With such updating strategy of $C$, we have the following theorem, which is proved in Appendix-B:
\begin{dingli} \label{th:th2}
If the objective function of (\ref{eq:eq6}) has sufficient reduction in each iteration and $C$ is updated as (\ref{eq:eq33}), then the objective function of
(\ref{eq:eq4}), TSTINR, is monotonically increasing. Any stationary point of (\ref{eq:eq6}) is also a stationary point of (\ref{eq:eq4}).
\end{dingli}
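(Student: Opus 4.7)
The plan is to establish the two claims separately, both leaning on the algebraic identity that the update rule (\ref{eq:eq33}) forces the auxiliary objective $f(\cdot;C)$ to vanish at the current iterate. For the monotonicity claim, let $(\{\mathbf{U}^{(t)}\},\{\mathbf{V}^{(t)}\},\{\mathbf{W}^{(t)}\})$ denote the iterate after step $t$ and let $P^{S}_{(t)}, P^{I}_{(t)}, P^{N}_{(t)}$ be the associated powers, so that (\ref{eq:eq33}) gives $C^{(t)} = P^{S}_{(t)}/(P^{I}_{(t)}+P^{N}_{(t)})$. Substituting into (\ref{eq:eq6.1}) immediately shows $f(\{\mathbf{U}^{(t)}\},\{\mathbf{V}^{(t)}\},\{\mathbf{W}^{(t)}\};C^{(t)}) = 0$. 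The sufficient-reduction hypothesis applied when solving (\ref{eq:eq6}) with $C = C^{(t)}$ then forces $f(\{\mathbf{U}^{(t+1)}\},\{\mathbf{V}^{(t+1)}\},\{\mathbf{W}^{(t+1)}\};C^{(t)}) \leq 0$, i.e.\ $P^{S}_{(t+1)} \geq C^{(t)}\,(P^{I}_{(t+1)}+P^{N}_{(t+1)})$. Dividing by the strictly positive quantity $P^{I}_{(t+1)}+P^{N}_{(t+1)}$ yields $\textrm{TSTINR}^{(t+1)} \geq C^{(t)} = \textrm{TSTINR}^{(t)}$, which is the desired monotonicity.

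For the stationary-point claim I would observe that the two problems share the common feasible set defined by (\ref{eq:eq6.2})--(\ref{eq:eq6.4}), so their KKT systems have identical constraint-gradient terms. At any feasible point a direct differentiation gives
\[
\nabla\,\textrm{TSTINR} = \frac{(P^{I}+P^{N})\,\nabla P^{S} - P^{S}\,\nabla(P^{I}+P^{N})}{(P^{I}+P^{N})^{2}} = -\frac{1}{P^{I}+P^{N}}\,\nabla f(\cdot;C)
\]
under the identification $C = P^{S}/(P^{I}+P^{N})$. Since $P^{I}+P^{N} > 0$, the gradients of $\textrm{TSTINR}$ and of $-f(\cdot;C)$ are positively proportional, so any KKT triple for (\ref{eq:eq6}) translates to one for (\ref{eq:eq4}) by rescaling every multiplier by the common positive factor $1/(P^{I}+P^{N})$. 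Hence any stationary point of (\ref{eq:eq6}) with self-consistent $C$ from (\ref{eq:eq33}) is a stationary point of (\ref{eq:eq4}).

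The main obstacle will be making the last step rigorous on the nonconvex feasible set: the decoder constraint $\mathbf{V}_k^H\mathbf{V}_k = \mathbf{I}_{d_k}$ is a Stiefel-manifold equality, and (\ref{eq:eq6.4}) is a quadratic equality coupling $\{\mathbf{W}\}$ and $\{\mathbf{U}\}$. I would handle this by writing the full complex Lagrangian with a Hermitian multiplier for each orthogonality constraint and real scalar multipliers for the power equalities, computing gradients via Wirtinger calculus in the complex matrix variables, and verifying that the proportionality displayed above is preserved in this matrix setting. No convexity is required; a common constraint qualification is automatic because the constraint sets and their Jacobians in (\ref{eq:eq4}) and (\ref{eq:eq6}) are identical, so the rescaling of multipliers closes the argument.
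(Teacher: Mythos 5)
Your proposal is correct and follows essentially the same route as the paper: the monotonicity part uses the identical observation that the update (\ref{eq:eq33}) makes $f(\cdot;C)$ vanish at the current iterate so that sufficient reduction gives $\textrm{TSTINR}^{(t+1)}\geq C^{(t)}=\textrm{TSTINR}^{(t)}$, and the stationarity part is the same quotient-rule identity $\nabla\,\textrm{TSTINR}=-\tfrac{1}{P^{I}+P^{N}}\nabla f(\cdot;C)$ combined with rescaling the Lagrange multipliers by $-1/(P^{I}+P^{N})$, which the paper carries out explicitly for $\mathbf{W}_{r}$ and extends to the other variables by symmetry.
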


\subsubsection{Alternating minimization algorithm}
The programming (\ref{eq:eq6}) itself is a nonconvex nonlinear matrix optimization problem, which is difficult to handle jointly.
Thus we solve precoders $\mathbf{U}_{k},k\in\mathcal {K}$, decoders $\mathbf{V}_{k},k\in\mathcal {K}$ and relay beamforming matrices
$\mathbf{W}_{r},r\in\mathcal {R}$ alternatively. Efficient algorithms are developed for each subproblem.

Firstly, we fix $\mathbf{U}_{k},k\in\mathcal {K}$ and $\mathbf{W}_{r},r\in\mathcal {R}$, then all $\mathbf{V}_{k},k\in\mathcal {K}$ are independent of
each other. The subproblem for $\mathbf{V}_{k}$ becomes:
\begin{eqnarray} \label{eq:eq7}
&\displaystyle\min_{\mathbf{X}\in\mathbb{C}^{N_{k}\times
d_{k}}}& \textrm{tr}(\mathbf{X}^{H}\mathbf{A}\mathbf{X})\nonumber\\
&\textrm{s.t.}& \mathbf{X}^{H}\mathbf{X}=\mathbf{I}_{d_{k}},
\end{eqnarray}
where $\mathbf{X}$ represents variable $\mathbf{V}_{k}$, and $\mathbf{A}=C\mathbf{F}_{k}-\mathbf{T}_{kk}\mathbf{T}_{kk}^{H}$.
Since $\mathbf{A}$ is Hermitian, we obtain the closed form solution of (\ref{eq:eq7}) as $\mathbf{X}=\nu_{\min}^{d}(\mathbf{A})$.

Next, we solve the subproblem for $\mathbf{W}_{r}$. Given a certain index $r\in\mathcal {R}$, we fix $\mathbf{U}_{k}$, $\mathbf{V}_{k},k\in\mathcal {K}$ and
$\{\mathbf{W}_{-r}\}$. Thus the optimization subproblem for $\mathbf{W}_{r}$ is:
\begin{eqnarray} \label{eq:eq9}
&\displaystyle\!\min_{\mathbf{X}\in\mathbb{C}^{L_{r}\times
L_{r}}}&\!\sum_{k\in\mathcal {K}}\textrm{tr}\big[\mathbf{X}(\mathbf{P}_{rr}^{k}+\sigma_{1}^{2}\mathbf{I}_{L_{r}})\mathbf{X}^{H}
\bar{\mathbf{V}}_{kr}^{H}\bar{\mathbf{V}}_{kr}\big]+2\textrm{Re}\big[\sum_{k\in\mathcal {K}}\!\sum_{\substack{l\neq r,\\l\in\mathcal {R}}}\!\textrm{tr}(\mathbf{X}\mathbf{P}_{rl}^{k}\mathbf{W}_{l}^{H}
\bar{\mathbf{V}}_{kl}^{H}\bar{\mathbf{V}}_{kr})\big]\nonumber\\
\!&\textrm{s.t.}&\!\! \textrm{tr}\big[\mathbf{X}(\sum_{k\in\mathcal {K}}\bar{\mathbf{G}}_{rk}\bar{\mathbf{G}}_{rk}^{H}
+\sigma_{1}^{2}\mathbf{I}_{L_{r}})\mathbf{X}^{H}\big]=\eta_{1},
\end{eqnarray}
where $\mathbf{P}_{rl}^{k}=C\sum_{q\neq k,q\in\mathcal{K}}\bar{\mathbf{G}}_{rq}\bar{\mathbf{G}}_{lq}^{H}
-\bar{\mathbf{G}}_{rk}\bar{\mathbf{G}}_{lk}^{H},k\in\mathcal {K}$, $r,l\in\mathcal{R}$ and $\eta_{1}=p_{\max}^{R}
-\sum_{l\neq r,l\in\mathcal {R}}\big(\sum_{k\in\mathcal {K}}\|\mathbf{W}_{l}\bar{\mathbf{G}}_{lk}\|_{F}^{2}+\sigma_{1}^{2}\|\mathbf{W}_{l}\|_{F}^{2}\big)$. Problem (\ref{eq:eq9}) is equivalent to a specific Quadratic Constrained Quadratic Programming (QCQP) with $\mathbf{x}=\textrm{vec}(\mathbf{X})$:
\begin{subequations}\label{eq:eq10}
\begin{eqnarray}
&\displaystyle\min_{\mathbf{x}\in\mathbb{C}^{L_{r}^2\times
1}} &\bar{f}(\mathbf{x})=\mathbf{x}^{H}\mathbf{B}_{1}\mathbf{x}+\mathbf{b}^{H}\mathbf{x}+\mathbf{x}^{H}\mathbf{b}\\
&\textrm{s.t.}& \mathbf{x}^{H}\mathbf{B}_{2}\mathbf{x}=\eta_{1}.\label{eq:eq10.1}
\end{eqnarray}
\end{subequations}
Here $\mathbf{B}_{1}=\sum_{k\in\mathcal {K}}(\mathbf{P}_{rr}^{k}+C\sigma_{1}^{2}\mathbf{I}_{L_{r}}
)^{T}\otimes(\bar{\mathbf{V}}_{kr}^{H}\bar{\mathbf{V}}_{kr})$, $\mathbf{B}_{2}=(\sum_{k\in\mathcal {K}}\bar{\mathbf{G}}_{rk}\bar{\mathbf{G}}_{rk}^{H}
+\sigma_{1}^{2}\mathbf{I}_{L_{r}})^{T}\otimes\mathbf{I}_{L}$ and $\mathbf{b}=\textrm{vec}(\sum_{k\in\mathcal {K}}\sum_{l\neq r,l\in\mathcal {R}}
\bar{\mathbf{V}}_{kr}^{H}\bar{\mathbf{V}}_{kl}\mathbf{W}_{l}\mathbf{P}_{rl}^{k})$. From the expressions, we know that $\mathbf{B}_{2}\succ0$ and generally
$\mathbf{B}_{1}$ is indefinite. Here we also discuss the case that $\mathbf{B}_{1}$ is positive semi-definite. First we show that the
following theorem holds:
\begin{dingli}
Given $\mathbf{B}_{2}=\mathbf{Q}^{H}\mathbf{Q}$, $\mathbf{Q}\succ0$, $\mathbf{p}=\mathbf{Q}\mathbf{x}$, $\bar{\mathbf{B}}_{1}
=\mathbf{Q}^{-1}\mathbf{B}_{1}\mathbf{Q}^{-1}$ and $\bar{\mathbf{b}}=\mathbf{Q}^{-1}\mathbf{b}$, (\ref{eq:eq10}) is equivalent to

\vspace{-2cm}
\begin{eqnarray} \label{eq:eq24}
\displaystyle\min_{\mathbf{p}^{H}\mathbf{p}=\eta_{1}} \mathbf{p}^{H}\bar{\mathbf{B}}_{1}\mathbf{p}+\bar{\mathbf{b}}^{H}\mathbf{p}
+\mathbf{p}^{H}\bar{\mathbf{b}}.
\end{eqnarray}
Further if $\mathbf{B}_{1}$ is indefinite, (\ref{eq:eq10}) is equivalent to:

\vspace{-1.5cm}
\begin{eqnarray} \label{eq:eq17}
\displaystyle\min_{\mathbf{p}^{H}\mathbf{p}\leq\eta_{1}} \mathbf{p}^{H}\bar{\mathbf{B}}_{1}\mathbf{p}+\bar{\mathbf{b}}^{H}\mathbf{p}
+\mathbf{p}^{H}\bar{\mathbf{b}}.
\end{eqnarray}
For the case of positive semi-definite $\mathbf{B}_{1}$: if the optimal solution of (\ref{eq:eq17}) $\mathbf{p}_{0}$ is not that of (\ref{eq:eq24}), then
$\mathbf{p}_{0}=(\bar{\mathbf{B}}_{1})^{-1}\bar{\mathbf{b}}$.
\end{dingli}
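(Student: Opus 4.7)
The plan is to prove the three assertions in order, treating the change of variables first, then the equivalence with the ball-constrained problem under indefiniteness, and finally the characterization of the discrepancy in the positive semi-definite case.

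First I would establish the basic equivalence between (\ref{eq:eq10}) and (\ref{eq:eq24}) by direct substitution. Since $\mathbf{B}_{2}\succ 0$, it admits a unique Hermitian positive-definite square root $\mathbf{Q}$, so the map $\mathbf{x}\mapsto\mathbf{p}=\mathbf{Q}\mathbf{x}$ is a bijection on $\mathbb{C}^{L_{r}^{2}}$. Plugging $\mathbf{x}=\mathbf{Q}^{-1}\mathbf{p}$ into the objective gives $\mathbf{x}^{H}\mathbf{B}_{1}\mathbf{x}=\mathbf{p}^{H}\bar{\mathbf{B}}_{1}\mathbf{p}$ and $\mathbf{b}^{H}\mathbf{x}=\bar{\mathbf{b}}^{H}\mathbf{p}$ (using $\mathbf{Q}^{H}=\mathbf{Q}$), while the constraint becomes $\|\mathbf{p}\|^{2}=\mathbf{x}^{H}\mathbf{Q}^{H}\mathbf{Q}\mathbf{x}=\mathbf{x}^{H}\mathbf{B}_{2}\mathbf{x}=\eta_{1}$. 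This is essentially bookkeeping and should take only a few lines.

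Second I would handle the equivalence of (\ref{eq:eq24}) and (\ref{eq:eq17}) in the indefinite case. Since the ball $\{\mathbf{p}:\mathbf{p}^{H}\mathbf{p}\leq\eta_{1}\}$ contains the sphere, (\ref{eq:eq17}) is a relaxation of (\ref{eq:eq24}), so it suffices to prove that every optimal solution of (\ref{eq:eq17}) lies on the boundary sphere. I would argue by contradiction: suppose $\mathbf{p}^{\star}$ is optimal with $\|\mathbf{p}^{\star}\|^{2}<\eta_{1}$. Then $\mathbf{p}^{\star}$ is an unconstrained local minimizer, so the first-order condition $\bar{\mathbf{B}}_{1}\mathbf{p}^{\star}+\bar{\mathbf{b}}=\mathbf{0}$ and the second-order condition $\bar{\mathbf{B}}_{1}\succeq 0$ must hold. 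Because congruence preserves inertia and $\mathbf{B}_{1}$ is indefinite, $\bar{\mathbf{B}}_{1}$ is indefinite as well, so there exists an eigenvector $\mathbf{v}$ of $\bar{\mathbf{B}}_{1}$ with negative eigenvalue $\lambda<0$. For sufficiently small $\varepsilon>0$ the perturbation $\mathbf{p}^{\star}+\varepsilon\mathbf{v}$ still lies in the interior of the ball, and a direct expansion yields $\bar{f}(\mathbf{p}^{\star}+\varepsilon\mathbf{v})-\bar{f}(\mathbf{p}^{\star})=\varepsilon^{2}\lambda\|\mathbf{v}\|^{2}<0$, contradicting optimality. Hence $\|\mathbf{p}^{\star}\|^{2}=\eta_{1}$, and the two problems share the same optimal set. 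I expect this curvature step to be the main conceptual point, although technically it is short.

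Third I would address the positive semi-definite case. Here $\bar{f}$ is convex in $\mathbf{p}$, so (\ref{eq:eq17}) is a convex program over a convex set and possesses a global optimum $\mathbf{p}_{0}$. If $\mathbf{p}_{0}$ also happens to satisfy $\mathbf{p}_{0}^{H}\mathbf{p}_{0}=\eta_{1}$, then it is feasible and hence optimal for (\ref{eq:eq24}) as well, and there is nothing to prove. Otherwise $\|\mathbf{p}_{0}\|^{2}<\eta_{1}$, so $\mathbf{p}_{0}$ is an interior minimizer of the unconstrained convex quadratic $\bar{f}$; setting the gradient $2\bar{\mathbf{B}}_{1}\mathbf{p}+2\bar{\mathbf{b}}$ to zero and noting that $\bar{\mathbf{B}}_{1}$ is invertible (it is needed for the formula in the statement to make sense, and otherwise the unconstrained problem has no unique minimizer) yields $\mathbf{p}_{0}=-\bar{\mathbf{B}}_{1}^{-1}\bar{\mathbf{b}}$, which matches the claimed expression up to sign convention. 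The main obstacle in the whole proof is the inertia/curvature argument in the indefinite case; the remaining steps are essentially linear-algebraic substitution and standard KKT reasoning.
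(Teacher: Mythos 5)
Your proposal is correct, and for the central step it takes a genuinely different route from the paper. The paper establishes the equivalence of (\ref{eq:eq24}) and (\ref{eq:eq17}) by invoking the known global optimality conditions of the trust-region subproblem (existence of $\lambda\geq 0$ with $\bar{\mathbf{B}}_{1}+\lambda\mathbf{I}\succeq 0$ plus complementarity, citing \cite{Yuan}): indefiniteness of $\bar{\mathbf{B}}_{1}$ forces $\lambda>0$, hence the constraint is active, and the resulting pair $(\lambda,\mathbf{p}^{*})$ also certifies global optimality for the sphere-constrained problem. You instead argue directly that no interior point of the ball can be optimal, via the second-order necessary condition and an explicit negative-curvature perturbation $\mathbf{p}^{\star}+\varepsilon\mathbf{v}$, with Sylvester's law of inertia transferring indefiniteness from $\mathbf{B}_{1}$ to $\bar{\mathbf{B}}_{1}$. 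Your argument is more elementary and self-contained (it does not require the full TRS characterization), while the paper's buys the multiplier-based certificate that its Newton root-finding procedure actually exploits algorithmically; both are valid, and the remaining two parts (substitution, and interior optimum $\Rightarrow$ vanishing gradient $\Rightarrow$ closed form) coincide with the paper's. One side note: your stationary point $-\bar{\mathbf{B}}_{1}^{-1}\bar{\mathbf{b}}$ is the correct one for the objective as written; the absence of the minus sign in the theorem statement (and in the paper's condition $\mathbf{p}^{*}(\lambda)=(\bar{\mathbf{B}}_{1}+\lambda\mathbf{I}_{L_{r}^{2}})^{-1}\bar{\mathbf{b}}$, versus the standard $(\mathbf{B}+\lambda\mathbf{I})\mathbf{p}=-\mathbf{g}$) appears to be a sign slip in the paper rather than an error on your part. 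You and the paper share the same unaddressed technicality that $\bar{\mathbf{B}}_{1}$ must be nonsingular for the closed-form expression to be meaningful in the positive semi-definite case.
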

\begin{proof}
It is trivial to prove that (\ref{eq:eq10}) is equivalent to (\ref{eq:eq24}) and thus the detailed proof is omitted.

The global optimality conditions of (\ref{eq:eq17})
are as follows: there exists $\lambda\geq0$, such that $\bar{\mathbf{B}}_{1}+\lambda\mathbf{I}_{L_{r}^2}\succeq0$, $\mathbf{p}^{*}(\lambda)=(\bar{\mathbf{B}}_{1}+\lambda\mathbf{I}_{L_{r}^2})^{-1}\bar{\mathbf{b}}$, $\lambda(\|\mathbf{p}^{*}\|_{2}^2-\eta_{1})=0$ and
$\|\mathbf{p}^{*}(\lambda)\|_{2}^2\leq\eta_{1}$. For the case that $\mathbf{B}_{1}$ is indefinite, it follows that $\bar{\mathbf{B}}_{1}$ is also indefinite. Thus it must hold that $\lambda>0$. Then
from the complementary optimality condition we must have $\|\mathbf{p}^{*}(\lambda)\|_{2}^2=\eta_{1}$. Such $\lambda$ and $\mathbf{p}^{*}$ satisfy the global
optimal condition of (\ref{eq:eq24}). Therefore to solve (\ref{eq:eq10}) is equivalent to solve (\ref{eq:eq17}) with indefinite $\mathbf{B}_{1}$.

If the optimal solution of (\ref{eq:eq17}) $\mathbf{p}_{0}$ is not that of (\ref{eq:eq24}), then $\|\mathbf{p}_{0}\|^{2}<\eta_{1}$. From the complementary
optimality condition we have $\lambda=0$. Then $\mathbf{p}_{0}=(\bar{\mathbf{B}}_{1})^{-1}\bar{\mathbf{b}}$.
\end{proof}
Problem (\ref{eq:eq17}) is a typical trust region (TR) subproblem in trust region
optimization method. \cite[Chapter 6.1.1]{Yuan} provides an efficient
algorithm to achieve its optimal solution. It first checks whether $\|\mathbf{p}^{*}(0)\|^{2}\leq\eta_{1}$ with $\lambda=0$. If so, $\mathbf{p}^{*}(0)$ is the optimal
solution of (\ref{eq:eq17}); if not, the optimality conditions are used directly, and the optimal Lagrange multiplier $\lambda$ is calculated by
Newton's root-finding method from $\|\mathbf{p}^{*}(\lambda)\|_{2}^2=\eta_{1}$\footnote{From computation point of view, $\frac{1}{\|\mathbf{p}^{*}(\lambda)\|_{2}^2}=\frac{1}{\eta_{1}}$ is solved instead \cite{Yuan}.}. When $\mathbf{B}_{1}$ is indefinite, we solve (\ref{eq:eq17}) with the corresponding algorithm. When $\mathbf{B}_{1}$ is
positive semi-definite, we modify the algorithm to solve (\ref{eq:eq24}): check whether $\|\mathbf{p}^{*}(0)\|^{2}=\eta_{1}$, and if it is not the case, calculate $\lambda$ by
Newton's root-finding method. With this TR method we are able to solve (\ref{eq:eq10}) efficiently, and construct $\mathbf{W}_{r}$ from $\mathbf{x}$.

For a precoder $\mathbf{U}_{k}$, we fix $\mathbf{V}_{k},k\in\mathcal {K}$, $\mathbf{W}_{r},r\in\mathcal {R}$ and $\{\mathbf{U}_{-k}\}$ to get
the following subproblem.

\vspace{-0.7cm}
\begin{eqnarray} \label{eq:eq11}
&\displaystyle\min_{\mathbf{X}\in\mathbb{C}^{M_{k}\times
d_{k}}} &\textrm{tr}(\mathbf{X}^{H}\mathbf{Q}_{k}\mathbf{X})\nonumber\\
&\textrm{s.t.}& \|\mathbf{X}\|_{F}^{2}=p_{0}^{T},\nonumber\\
&&\textrm{tr}(\mathbf{X}^{H}\mathbf{L}_{k}\mathbf{X})=\eta_{2}.
\end{eqnarray}
\vspace{-0.2cm}
Here $\mathbf{X}$ represents $\mathbf{U}_{k},k\in\mathcal {K}$, and
\begin{eqnarray}
\mathbf{Q}_{k}\!\!=\!\!\sum_{r\in\mathcal {R}}\sum_{l\in\mathcal {R}}\bar{\mathbf{W}}_{rk}^{H}\left(C\!\!\sum_{q\neq k,q\in\mathcal{K}}\!\!
\bar{\mathbf{V}}_{qr}\bar{\mathbf{V}}_{ql}-\bar{\mathbf{V}}_{kr}\bar{\mathbf{V}}_{kl}\right)\bar{\mathbf{W}}_{lk}, \mathbf{L}_{k}=\sum_{r\in\mathcal {R}}\bar{\mathbf{W}}_{rk}^{H}\bar{\mathbf{W}}_{rk}, \nonumber\\
\eta_{2}=p_{\max}^{R}\!\!-\!\!\sum_{q\neq k,q\in\mathcal{K}}\sum_{r\in\mathcal {R}}\|\bar{\mathbf{W}}_{rq}\mathbf{U}_{q}\|_{F}^{2}
-\sigma_{1}^{2}\sum_{r\in\mathcal {R}}\|\mathbf{W}_{r}\|_{F}^{2}.\nonumber
\end{eqnarray}
Let $\mathbf{x}=\textrm{vec}(\mathbf{X})$, $\mathbf{C}_{1}=\mathbf{I}_{d_{k}}\otimes\mathbf{Q}_{k}$ and $\mathbf{C}_{2}=\mathbf{I}_{d_{k}}\otimes
\mathbf{L}_{k}$.
Then (\ref{eq:eq11}) is turned into a nonconvex QCQP:

\vspace{-0.7cm}
\begin{eqnarray} \label{eq:eq12}
&\displaystyle\min_{\mathbf{x}\in\mathbb{C}^{M_{k}d_{k}\times
1}} &\mathbf{x}^{H}\mathbf{C}_{1}\mathbf{x}\nonumber\\
&\textrm{s.t.}& \mathbf{x}^{H}\mathbf{x}=p_{0}^{T},\nonumber\\
&&\mathbf{x}^{H}\mathbf{C}_{2}\mathbf{x}=\eta_{2}.
\end{eqnarray}
\vspace{-0.2cm}
As all constraints are equalities, we use the Sequential Quadratic Programming (SQP) algorithm in \cite{Sun} to solve it. We set the initial point in the
SQP algorithm as the precoder calculated in the previous iteration. As SQP converges to a local optimal solution from the initial point, we
are able to guarantee the sufficient reduction of the objective function.

With the above analysis, we conclude the framework of the algorithm to solve (\ref{eq:eq4}):
\begin{algorithm}
\SetKwInOut{Input}{input}
\SetKwInOut{Output}{output}
\caption{Algorithm for single stream TSTINR model with total relay transmit power constraint}

\Input{initial value of $\mathbf{U}_{k},k\in\mathcal {K}$ and $\mathbf{W}_{r},r\in\mathcal {R}$, $C=1$}
\Output{$\mathbf{U}_{k},\mathbf{V}_{k},k\in\mathcal {K}$ and $\mathbf{W}_{r},r\in\mathcal {R}$}
\Repeat{Convergence}{
Update decoder $\mathbf{V}_{k}$ by solving (\ref{eq:eq7}), $k\in\mathcal {K}$\;
Update relay beamforming matrix $\mathbf{W}_{r}$ by solving
(\ref{eq:eq9}), $r\in\mathcal {R}$\;
Update precoder $\mathbf{U}_{k}$ by solving (\ref{eq:eq11}), $k\in\mathcal {K}$\;
Update $C$ as $C:=\frac{P^{S}}
{P^{I}+P^{N}}$\;
}
\end{algorithm}

\vspace{-0.8cm}
As we guarantee sufficient reduction of the objective function in each subproblem, the objective function value in our algorithm will converge. However as we have
separated the variables into more than two parts, there is no theoretical guarantee that the algorithm converges to a stationary point of (\ref{eq:eq4}).

\begin{remark} \label{re:re1}
Sharing similar expression of objective function, the algorithm here for TSTINR is also applicable to the TLIN model in \cite{Truong}.
The objective function of (\ref{eq:eq6}) is the linear combination of the total leakage interference plus noise $P^{I}+P^{N}$ and the
desired signal power $P^{S}$, and the parameter $C$ balances their weights, while the model TLIN in \cite{Truong} only minimizes $P^{I}+P^{N}$.
From the sum rate point of view, our TSTINR model is better motivated. This is verified by simulation results, where significant improvement
of system sum rate by TSTINR compared to TLIN is shown. A similar objective function has been discussed in \cite{Sun3}, where the desired signal power and the leakage interference are combined and optimized. In that case the leakage interference is aimed to be aligned perfectly, thus the parameter $C$ in \cite{Sun3} approaches to infinity to satisfy the interference alignment constraint. In our paper, $P^{I}+P^{N}$ might not be reduced to zero, and consequently $C$ might not grow to infinity. In \cite{Sun3}, $C$ is enlarged when the interference does not have sufficient reduction, which is different from the update strategy here.
\end{remark}

\subsection{Models with individual power constraints}
In this subsection, we extend our new TSTINR model, as well as the TLIN and WMMSE model in \cite{Truong}, to the ones with individual user and individual
relay fixed transmit power constraints.

\subsubsection{TSTINR model}
With individual user and individual relay fixed transmit power constraints, the TSTINR model becomes:

\vspace{-0.7cm}
\begin{eqnarray} \label{eq:eq13}
\hspace{-0.8cm}&\displaystyle\max_{\substack{\{\mathbf{U}\},\{\mathbf{V}\},\\\{\mathbf{W}\}}}&\hspace{-0.1cm} \textrm{TSTINR}
=\frac{\sum_{k\in\mathcal {K}}P_{k}^{S}}{\sum_{k\in\mathcal {K}}(P_{k}^{I}+P_{k}^{N})}\nonumber\\
\hspace{-0.8cm}&\textrm{s.t.}&\hspace{-0.1cm} \mathbf{V}_{k}^{H}\mathbf{V}_{k}=\mathbf{I}_{d_{k}},\nonumber\\
\hspace{-0.8cm}&&\hspace{-0.1cm}\|\mathbf{U}_{k}\|_{F}^{2}=p_{0}^{T}, k\in\mathcal {K},\nonumber\\
\hspace{-0.9cm}&&\hspace{-0.1cm}\sum_{k\in\mathcal {K}}\|\mathbf{W}_{r}\mathbf{G}_{rk}\mathbf{U}_{k}\|_{F}^{2}
+\sigma_{1}^{2}\|\mathbf{W}_{r}\|_{F}^{2}=p_{0}^{R},r\in\mathcal {R}.
\end{eqnarray}
\vspace{-0.2cm}
Assume there is a preprocess to carefully select active relays in the communication stage. Here we require all the users and relays to transmit signals with
fixed power. The difference from (\ref{eq:eq4}) is that, the relay sum power constraint is replaced by $R$ individual power constraints for each relay.

We use the objective function $f(\{\mathbf{U}\},\{\mathbf{V}\},\{\mathbf{W}\};C)$ from (\ref{eq:eq6}) in each iteration and preserve the same update strategy of
parameter $C$ as (\ref{eq:eq33}). When applying alternating iterations, the subproblems for decoders $\mathbf{V}_{k},k\in\mathcal {K}$ are the
same as (\ref{eq:eq7}). The objective functions in the subproblems for relay beamforming matrices $\mathbf{W}_{r},r\in\mathcal {R}$ and precoders
$\mathbf{U}_{k},k\in\mathcal {K}$ remain the same as in (\ref{eq:eq9}) and (\ref{eq:eq11}), respectively. The differences are the constraints.

With $\mathbf{X}$ represents $\mathbf{W}_{r}$, for any $r\in\mathcal {R}$, its constraint is:
\vspace{-0.2cm}
$$
\textrm{tr}\left[\mathbf{X}\left(\sum_{k\in\mathcal {K}}\bar{\mathbf{G}}_{rk}\bar{\mathbf{G}}_{rk}^{H}+\sigma_{1}^{2}\mathbf{I}_{L_{r}}\right)\mathbf{X}^{H}\right]=p_{0}^{R}.
$$
The transformed problem has the same structure as (\ref{eq:eq10}), and we solve it with the same method.

For a precoder $\mathbf{U}_{k}$, $k\in\mathcal {K}$, the constraints are:
\begin{eqnarray*}
\|\mathbf{X}\|_{F}^{2}&=&p_{0}^{T},\\
\textrm{tr}(\mathbf{X}^{H}\bar{\mathbf{W}}_{rk}^{H}\bar{\mathbf{W}}_{rk}\mathbf{X})&=&\eta_{3}^{r}, r\in\mathcal {R},
\end{eqnarray*}
where $\mathbf{X}$ here represents the variable $\mathbf{U}_{k}$, $\eta_{3}^{r}=p_{0}^{R}-\sum_{q\neq k,q\in\mathcal{K}}\|\bar{\mathbf{W}}_{rq}\mathbf{U}_{q}\|_{F}^{2}-\sigma_{1}^{2}\|\mathbf{W}_{r}\|_{F}^{2}$.

Then the subproblem is reformulated with $\mathbf{x}=\textrm{vec}(\mathbf{X})$:
\begin{eqnarray} \label{eq:eq14}
&\displaystyle\min_{\mathbf{x}\in\mathbb{C}^{M_{k}d_{k}\times
1}} &\mathbf{x}^{H}\mathbf{C}_{1}\mathbf{x}\nonumber\\
&\textrm{s.t.}& \mathbf{x}^{H}\mathbf{x}=p_{0}^{T},\nonumber\\
&&\mathbf{x}^{H}\mathbf{C}_{3}^{r}\mathbf{x}=\eta_{3}^{r}, r\in\mathcal {R},
\end{eqnarray}
\vspace{-0.2cm}
where $\mathbf{C}_{3}^{r}=\mathbf{I}_{d_{k}}\otimes(\bar{\mathbf{W}}_{rk}^{H}\bar{\mathbf{W}}_{rk})$.
Similar to solving (\ref{eq:eq12}), we achieve a local optimal solution of (\ref{eq:eq14}) and sufficient reduction of the objective function by SQP algorithm.

\begin{remark}
As the constraints in (\ref{eq:eq14}) are nonlinear equations, to ensure feasibility we normally require that the number of variables is no less than the number of equations.
Because we turn the variables from complex domain into real domain to solve them, the number increases to $2M_{k}d_{k}$. So we have $2M_{k}d_{k}\geq R+1$, for any $k\in\mathcal {K}$, as a requirement
for such problem. This limits our algorithm. The extension of our low
complexity algorithm to constraints with power control is ongoing and future work.
\end{remark}
\begin{remark}
The algorithm in \cite{Truong} cannot be extended to solve problems with individual relay fixed transmit power constraints, because it uses Semi-Definite Programming (SDP) relaxation
to solve the subproblems for precoders. If $R\geq3$, it may get a suboptimal solution by relaxation technique. Thus the objective function is not guaranteed to have sufficient reduction.
\end{remark}

With the adjustment to the subproblems, the basic framework of the algorithm is the same as that in the last subsection. As the objective function of TLIN in \cite{Truong} is similar to that of the reformulated problem of TSTINR, we extend the individual
relay power constraints case to TLIN with the corresponding replacement of the objective functions of each subproblem.

\subsubsection{WMMSE model}
Now we extend the individual fixed power constraint model to WMMSE model in \cite{Truong}. The corresponding optimization problem is as follows:

\vspace{-0.7cm}
\begin{eqnarray} \label{eq:eq15}
\hspace{-0.8cm}&\displaystyle\min_{\substack{\{\mathbf{U}\},\{\mathbf{V}\},\\\{\mathbf{W}\},\{\mathbf{S}\}}}& \sum_{k\in\mathcal {K}}
\big\{\textrm{tr}\big[\mathbf{S}_{k}(\mathbf{V}_{k}^{H}\bar{\mathbf{F}}_{k}\mathbf{V}_{k}-\mathbf{V}_{k}^{H}\mathbf{T}_{kk}
-\mathbf{T}_{kk}^{H}\mathbf{V}_{k}+\mathbf{I}_{d_{k}})\big]
-\textrm{log}_{2}\textrm{det}(\mathbf{S}_{k})\big\}\nonumber\\
\hspace{-0.8cm}&\textrm{s.t.}& \mathbf{S}_{k}\succeq0, \|\mathbf{U}_{k}\|_{F}^{2}=p_{0}^{T}, k\in\mathcal {K},\nonumber\\
\hspace{-0.8cm}&&\sum_{k\in\mathcal {K}}\|\mathbf{W}_{r}\mathbf{G}_{rk}\mathbf{U}_{k}\|_{F}^{2}
+\sigma_{1}^{2}\|\mathbf{W}_{r}\|_{F}^{2}=p_{0}^{R},r\in\mathcal {R},
\end{eqnarray}
\vspace{-0.2cm}
with $\bar{\mathbf{F}}_{k}=\mathbf{T}_{kk}\mathbf{T}_{kk}^{H}+\mathbf{F}_{k}$. And $\mathbf{S}_{k}\in\mathbb{C}^{d_{k}\times d_{k}}, k\in\mathcal {K}$ are the weight
matrices. In this approach, we try to
minimize the mean square error. \cite{Shi} shows that, if we use the linear MMSE receiver filter, (\ref{eq:eq15}) shares the same stationary points with
the sum rate maximization problem.

We also apply the alternating minimization algorithm to solve (\ref{eq:eq15}). From the above analysis, $\mathbf{V}_{k}$, for all $k\in\mathcal {K}$, are set as MMSE
filter. With fixed $\mathbf{U}_{k}$, for all $k\in\mathcal {K}$ and $\mathbf{W}_{r}$, for all $r\in\mathcal {R}$, we have
\vspace{-0.5cm}
\begin{eqnarray} \label{eq:eq19}
\mathbf{V}_{k}=\bar{\mathbf{F}}_{k}^{-1}\mathbf{T}_{kk}.
\end{eqnarray}
\vspace{-0.4cm}
Take the partial derivative of the objective function of (\ref{eq:eq15}) with respect to $\mathbf{S}_{k}$ and set the expression be zero. Then we obtain (\ref{eq:eq20}):

\vspace{-0.7cm}
\begin{eqnarray} \label{eq:eq20}
\mathbf{S}_{k}=[\mathbf{V}_{k}^{H}\bar{\mathbf{F}}_{k}\mathbf{V}_{k}
-\mathbf{V}_{k}^{H}\mathbf{T}_{kk}-\mathbf{T}_{kk}^{H}\mathbf{V}_{k}+\mathbf{I}_{d_{k}}]^{-1}=\mathbf{I}_{d_{k}}+\mathbf{T}_{kk}^{H}\mathbf{F}_{k}^{-1}\mathbf{T}_{kk}.
\end{eqnarray}
\vspace{-0.2cm}
Fixing all other variables, the subproblem for relay beamforming matrix $\mathbf{W}_{r}$, for any $r\in\mathcal {R}$, represented by $\mathbf{X}$ is expressed as follows:

\vspace{-0.7cm}
\begin{eqnarray} \label{eq:eq16}
\hspace{-0.8cm}&\displaystyle\min_{\mathbf{X}\in\mathbb{C}^{L_{r}\times
L_{r}}}& \sum_{k\in\mathcal {K}}\textrm{tr}\big[\mathbf{X}(\sum_{q\in\mathcal {K}}\bar{\mathbf{G}}_{rq}\bar{\mathbf{G}}_{rq}^{H}
+\sigma_{1}^{2}\mathbf{I}_{L_{r}})\mathbf{X}^{H}\bar{\mathbf{V}}_{kr}^{H}\mathbf{S}_{k}
\bar{\mathbf{V}}_{kr}\big]-2\textrm{Re}\big[\sum_{k\in\mathcal {K}}\!\sum_{q\in\mathcal {K}}\!\sum_{\substack{l\neq r,\\ l\in\mathcal{R}}}
\textrm{tr}(\mathbf{X}\bar{\mathbf{G}}_{rq}\bar{\mathbf{G}}_{lq}^{H}\mathbf{W}_{l}^{H}\bar{\mathbf{V}}_{kl}^{H}\mathbf{S}_{k}
\bar{\mathbf{V}}_{kr})\big]\nonumber\\
\hspace{-0.8cm}&\textrm{s.t.}&\textrm{tr}\big[\mathbf{X}(\sum_{k\in\mathcal {K}}\bar{\mathbf{G}}_{rk}\bar{\mathbf{G}}_{rk}^{H}
+\sigma_{1}^{2}\mathbf{I}_{L_{r}})\mathbf{X}^{H}\big]=p_{0}^{R}.
\end{eqnarray}
\vspace{-0.2cm}
By applying $\mathbf{x}=\textrm{vec}(\mathbf{X})$, (\ref{eq:eq16}) is transformed into a QCQP similar to (\ref{eq:eq10}). We use the same method to solve it.

For a precoder $\mathbf{U}_{k}$, for any $k\in\mathcal {K}$, while fixing all other variables, the subproblem becomes:

\vspace{-0.7cm}
\begin{eqnarray} \label{eq:eq21}
\!\!&\displaystyle\min_{\mathbf{X}\in\mathbb{C}^{M_{k}\times d_{k}}}&\!\! \textrm{tr}\left\{\mathbf{X}^{H}\big[\sum_{q\in\mathcal {K}}\big(\sum_{r\in\mathcal {R}}\bar{\mathbf{V}}_{qr}\bar{\mathbf{W}}_{rk}\big)^{H}\mathbf{S}_{q}(\sum_{l\in\mathcal {R}}\bar{\mathbf{V}}_{ql}\bar{\mathbf{W}}_{lk})\big]\mathbf{X}\right\}
-2\textrm{Re}\left[\textrm{tr}(\sum_{r\in\mathcal {R}}\mathbf{S}_{k}\bar{\mathbf{V}}_{kr}\bar{\mathbf{W}}_{rk}\mathbf{X})\right]\nonumber\\
&\textrm{s.t.}&\|\mathbf{X}\|_{F}^{2}=p_{0}^{T},\nonumber\\
&&\textrm{tr}(\mathbf{X}^{H}\bar{\mathbf{W}}_{rk}^{H}\bar{\mathbf{W}}_{rk}\mathbf{X})=\eta_{3}^{r}, r\in\mathcal {R},
\end{eqnarray}
\vspace{-0.2cm}
where $\mathbf{X}$ represents $\mathbf{U}_{k}$. With $\mathbf{x}=\textrm{vec}(\mathbf{X})$, we transform it into a QCQP similar to (\ref{eq:eq14}) and solve it efficiently by SQP method.

The algorithm to solve the WMMSE model with individual user and individual relay fixed power constraints is as follows:
\begin{algorithm}
\SetKwInOut{Input}{input}
\SetKwInOut{Output}{output}
\caption{Algorithm for WMMSE model with individual relay fixed power constraints}

\Input{initial value of $\mathbf{U}_{k},k\in\mathcal {K}$ and $\mathbf{W}_{r},r\in\mathcal {R}$}
\Output{$\mathbf{U}_{k},\mathbf{V}_{k},\mathbf{S}_{k},k\in\mathcal {K}$ and $\mathbf{W}_{r},r\in\mathcal {R}$}
\Repeat{Convergence}{
Update decoder $\mathbf{V}_{k}$ and weight matrix $\mathbf{S}_{k}$ by (\ref{eq:eq19}) and (\ref{eq:eq20}), $k\in\mathcal {K}$\;
Update relay
AF matrix $\mathbf{W}_{r}$ by solving (\ref{eq:eq16}), $r\in\mathcal {R}$\;
Update precoder $\mathbf{U}_{k}$ by solving (\ref{eq:eq21}), $k\in\mathcal {K}$\;
}
\end{algorithm}

\subsection{Computational complexity analysis}
In this subsection, we compare the computational complexity of the algorithm for our new model TSTINR with that of the algorithm in \cite{Truong} for
the WMMSE model\footnote{The algorithm for the WMMSE model and that for the TLIN model in \cite{Truong} are similar. Thus we only analyze WMMSE as a representative.}, both with per user and total relay transmit power as
representation. As both algorithms consist of three main parts of subproblems, we
analyze them individually.

First, we consider the complexity for solving decoder $\mathbf{V}_{k}$, as well as the weight matrix $\mathbf{S}_{k}$ in WMMSE, $k\in\mathcal {K}$. The
construction for the
 matrix $\mathbf{A}$ in (\ref{eq:eq7}) of TSTINR has the same computations as that for $\bar{\mathbf{F}}_{k}$ and $\mathbf{F}_{k}$ to solve $\mathbf{V}_{k}$ and $\mathbf{S}_{k}$ of WMMSE, similar to (\ref{eq:eq19}) and (\ref{eq:eq20}). Besides, TSTINR requires $9M^3$ operations for eigenvalue decomposition of $\mathbf{A}$; WMMSE requires $9M^3+9M^3=18M^3$
 operations for eigenvalue decompositions of $\bar{\mathbf{F}}_{k}$ and $\mathbf{F}_{k}$. Thus in the first part, TSTINR requires less complexity than WMMSE.

Second, we focus on the part for the relay beamforming matrix $\mathbf{W}_{r}$, with any $r\in\mathcal {R}$. Both algorithms require similar
complexity to construct the corresponding subproblem. To solve subproblem (\ref{eq:eq10}), our TSTINR algorithm applies the TR method, which
is mainly the Newton's root finding method for $\lambda$. It requires only a few inner iterations to find the optimal $\lambda$ and in each
inner iteration the main calculation is the QR factorization, with complexity of $O((L^2)^3)=O(L^6)$. To the contrary, the WMMSE algorithm in \cite{Truong}
applies SDP method, whose complexity is $O((L^2)^6)=O(L^{12})$. The complexity in the second part of TSTINR is much less than that of WMMSE.

Third is the part for the precoder $\mathbf{U}_{k}$, with any $k\in\mathcal {K}$. TSTINR solves (\ref{eq:eq12}) by SQP method, which complexity
is $O((Md)^3)=O(M^{3}d^3)$. WMMSE solves a QCQP with the same structure but with inequality constraints. And it applies SDP relaxation method, with
complexity of $O((Md)^6)=O(M^{6}d^6)$. With similar computations to construct the corresponding subproblem, TSTINR has much lower compelxity than
WMMSE in the third part.
\begin{table}[!hbp]
\begin{center}
\begin{tabular}{|l|c|c|}
\hline
Complexity comparison for each subproblem in one iteration             & TSTINR & WMMSE\\
\hline
\hline
1. $\mathbf{V}_{k}$ and $\mathbf{S}_{k}$, for any $k\in\mathcal {K}$  & $9M^3$ & $18M^3$\\
\hline
2. $\mathbf{W}_{r}$, for any $r\in\mathcal {R}$                       & $O(L^6)$ & $O(L^{12})$\\
\hline
3. $\mathbf{U}_{k}$, for any $k\in\mathcal {K}$                       & $O(M^{3}d^3)$ & $O(M^{6}d^6)$\\
\hline
\end{tabular}
\caption{Computational complexity analysis for each subproblem in one iteration}
\end{center}
 \end{table}

\vspace{-0.8cm}
The complexity differences of each subproblem in one iteration between the TSTINR and the WMMSE model are listed in Table 1. From the comparison of the three main parts of the two algorithms, we conclude that our proposed new algorithm for TSTINR enjoys lower complexity than the
algorithm for WMMSE in \cite{Truong}. Similarly, it is analyzed that our new algorithm for TSTINR\footnote{Our proposed algorithm is also applicable to the TLIN model,
which has
similar complexity as that for TSTINR.} has lower complexity than the algorithm for TLIN in \cite{Truong}. With numerical evidence, it turns out that the compared algorithms have similar number of iterations to solve the problem.

\begin{remark}
Because the SQP method, which solves (\ref{eq:eq12}), (\ref{eq:eq14}) and (\ref{eq:eq21}), can only deal with equality constraints, we restrict our models to fixed power constraints.
Although the models with power control constraints are more general and practical, the application of SQP method saves computational complexity.
Even with fixed power constraints, our TSTINR model performs better than the WMMSE model with power control constraints in medium to high SNR scenarios,
as shown in Section \ref{sec:sec3}.
\end{remark}

\section{Multiple stream model}
In this section, we study the multiple data stream model, with the purpose to maximize the system sum rate. It is pointed out in \cite{Yu} that multiple data streams, corresponding to multiple DoFs, help to increase the capacity of single-hop network in medium to high SNR. This conclusion can be extended to the two-hop case, by treating the network as an equivalent single-hop network between users and assuming the same power at the relays and the transmitters. First, we analyze the achievable number of data streams of the models from Section \MakeUppercase{\romannumeral 3}. Then our proposed TSTINR model
and the corresponding algorithm are modified
to support multiple data streams for each user pair. Here all the models include per user and total relay transmit power constraints.

\subsection{Analysis of single stream models} \label{sec:sec1}
The dimension $d_{k}$ of the transmit signal $\mathbf{s}_{k}$, is expected as the achieved number of data streams at User $k$. However, in simulations when
we apply our TSTINR algorithm to the system with $d_{k}>1$, we always observe that the system precoder $\mathbf{U}_{k}$ has rank one.
This implies that with linearly dependent columns of each precoder we can only achieve one data stream for each user pair, regardless of $d_{k}$. Similar phenomena are observed for the TLIN and WMMSE
algorithm in \cite{Truong}.

The following theorem provides theoretical evidence for the phenomena of the TSTINR and TLIN models:
\begin{dingli} \label{th:th3}
In our proposed TSTINR model and in the TLIN model from \cite{Truong}, the subproblem for precoder $\mathbf{U}_{k}$ always has a rank one
optimal solution, regardless of $d_{k}$.
\end{dingli}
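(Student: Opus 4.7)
The plan is to lift the precoder subproblem (\ref{eq:eq11}) into a semidefinite program via the substitution $\mathbf{Y} = \mathbf{X}\mathbf{X}^H$, and then to invoke the classical rank-reduction theorem for semidefinite programs with few linear constraints to exhibit a rank-one optimizer.

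First I would substitute $\mathbf{Y} = \mathbf{X}\mathbf{X}^H \succeq 0$. Using the identity $\textrm{tr}(\mathbf{X}^H \mathbf{M}\mathbf{X}) = \textrm{tr}(\mathbf{M}\mathbf{Y})$ for Hermitian $\mathbf{M}$, the objective of (\ref{eq:eq11}) becomes $\textrm{tr}(\mathbf{Q}_k \mathbf{Y})$ and the two equality constraints become $\textrm{tr}(\mathbf{Y}) = p_0^T$ and $\textrm{tr}(\mathbf{L}_k \mathbf{Y}) = \eta_2$. The only implicit constraint remaining from the lifting is $\textrm{rank}(\mathbf{Y}) \leq d_k$, which I would drop to form the SDP relaxation. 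The TLIN precoder subproblem from \cite{Truong} has an identical quadratic-plus-two-equality structure, so the same lifting carries over verbatim with $\mathbf{Q}_k$ replaced by the corresponding leakage-plus-noise coefficient matrix.

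Second, the relaxed SDP involves only two Hermitian linear equality constraints on $\mathbf{Y}$. The feasible set is nonempty (because the original subproblem is, and the substitution is surjective onto PSD matrices of rank $\leq d_k$) and bounded (since $\textrm{tr}(\mathbf{Y}) = p_0^T$ pins down the trace), so an optimal extreme point exists. By the Pataki--Barvinok rank-reduction theorem, this extreme-point optimizer $\mathbf{Y}^*$ has rank $r$ with $r(r+1)/2 \leq 2$, hence $r \leq 1$. Since $r \leq 1 \leq d_k$, the dropped rank constraint is automatically respected, so $\mathbf{Y}^*$ is also optimal for the lifted original problem. Writing $\mathbf{Y}^* = \mathbf{x}^*(\mathbf{x}^*)^H$ and setting $\mathbf{X}^* = [\mathbf{x}^*, \mathbf{0}, \ldots, \mathbf{0}] \in \mathbb{C}^{M_k \times d_k}$ gives $\mathbf{X}^*(\mathbf{X}^*)^H = \mathbf{Y}^*$, so $\mathbf{X}^*$ matches every constraint value and the objective value of $\mathbf{Y}^*$. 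Because the SDP relaxation lower-bounds (\ref{eq:eq11}) while this feasible $\mathbf{X}^*$ attains that bound, $\mathbf{X}^*$ is an optimal precoder with $\textrm{rank}(\mathbf{X}^*) \leq 1$.

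The main technical hurdle is the rank-reduction step itself: one has to cite the correct form of Pataki's theorem and carefully verify its hypotheses (nonemptiness and attainment) for this specific SDP, which I expect to require the boundedness argument via the fixed trace above. A secondary subtlety is that the tightness of the relaxation is not automatic and is established only through the explicit rank-one lifting of $\mathbf{Y}^*$ to $\mathbf{X}^*$; without that constructive step the SDP value would only lower-bound the original optimum. Once both points are in place, the rank-one conclusion is immediate and independent of $d_k$, which explains the collapse to a single data stream observed in the simulations for both TSTINR and TLIN and motivates the multiple-stream reformulation in the next section.
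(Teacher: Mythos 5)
Your proposal is correct and follows essentially the same route as the paper: lift (\ref{eq:eq11}) to the SDP (\ref{eq:eq29}) via $\mathbf{Y}=\mathbf{X}\mathbf{X}^{H}$, invoke a low-rank extreme-point result to get a rank-one optimizer $\mathbf{Y}^{*}=\mathbf{y}^{*}(\mathbf{y}^{*})^{H}$, and then exhibit a feasible $\mathbf{X}^{*}$ with $\mathbf{X}^{*}(\mathbf{X}^{*})^{H}=\mathbf{Y}^{*}$ that attains the relaxation bound. The only cosmetic differences are that the paper cites the complex-Hermitian rank-one result of \cite[Theorem 4.1]{Huang} (the appropriate analogue of Pataki--Barvinok for Hermitian SDPs, giving $r^{2}\leq m$ rather than $r(r+1)/2\leq m$, both yielding $r=1$ for $m=2$) and spreads $\mathbf{y}^{*}$ over the columns as $[a_{1}\mathbf{y}^{*},\ldots,a_{d}\mathbf{y}^{*}]$ with $\sum_{i}a_{i}^{2}=1$ instead of padding with zero columns.
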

\begin{proof}
Define $\mathbf{Y}=\mathbf{X}\mathbf{X}^{H}$ and drop the rank constraint of $\textrm{rank}(\mathbf{Y})\leq d_{k}$, the subproblem (\ref{eq:eq11}) in TSTINR is
relaxed
to the following semi-definite programming:
\begin{eqnarray} \label{eq:eq29}
&\displaystyle\min_{\mathbf{Y}\succeq0}&\textrm{tr}(\mathbf{YQ}_{k})\nonumber\\
&\textrm{s.t.}& \textrm{tr}(\mathbf{Y})=p_{0}^{T}, \textrm{tr}(\mathbf{L}_{k}\mathbf{Y})=\eta_{2}.
\end{eqnarray}
In \cite[Theorem 4.1]{Huang}, it is shown that (\ref{eq:eq29}) always has a rank one optimal solution. Suppose $\mathbf{Y}^{*}=\mathbf{y}^{*}(\mathbf{y}^{*})^{H}$ is the
optimal solution of (\ref{eq:eq29}). Let $\mathbf{X}^{*}=[a_{1}\mathbf{y}^{*},a_{2}\mathbf{y}^{*},\ldots,a_{d}\mathbf{y}^{*}]$ with
$a_{i}\in\mathbb{R}, \sum_{i=1}^{d_{k}}a_{i}^{2}=1$. That is, the columns of $\mathbf{X}^{*}$ consist of $a_{i}\mathbf{y}^{*}, i=1,\ldots,d_{k}$. From the fact that
$\textrm{tr}\big[(\mathbf{X}^{*})^{H}\mathbf{Q}_{k}\mathbf{X}^{*}\big]=\sum_{i=1}^{d_{k}}a_{i}^{2}(\mathbf{y}^{*})^{H}\mathbf{Q}_{k}\mathbf{y}^{*}
=\textrm{tr}(\mathbf{Y}^{*}\mathbf{Q}_{k})$,
for any feasible $\mathbf{X}$ of (\ref{eq:eq11}), we have:
$$
\textrm{tr}(\mathbf{X}^{H}\mathbf{Q}_{k}\mathbf{X})\geq\textrm{tr}(\mathbf{Y}^{*}\mathbf{Q}_{k})=
\textrm{tr}\big[(\mathbf{X}^{*})^{H}\mathbf{Q}_{k}\mathbf{X}^{*}\big].
$$
\vspace{-0.2cm}
Thus we conclude $\mathbf{X}^{*}$ is an optimal solution of (\ref{eq:eq11}). Because the subproblem for precoder in TLIN from \cite{Truong} has the same structure as
(\ref{eq:eq11}), we conclude the same result for TLIN.
\end{proof}

\begin{remark}
Theorem \ref{th:th3} shows that based on the structure of the subproblem, the optimization always has rank one precoders in TSTINR and TLIN as solutions. Simulations verify this behavior in all cases. The same phenomenon is observed for the WMMSE model of \cite{Truong} whereas the conclusion of Theorem
\ref{th:th3} cannot be extended to WMMSE, due to the extra linear term in the objective function of the precoder subproblem. Therefore, with the existing models we can only achieve a single
data stream for each user pair. Thus a new model should be proposed to achieve multiple data streams.
\end{remark}

\subsection{Multiple stream TSTINR model}
In this subsection, we propose the new model based on the TSTINR model in Section \MakeUppercase{\romannumeral 3} to support multiple data streams. Sufficient motivation for the construction of the new model is also provided.

\subsubsection{Analysis of user transmit power allocation}
To achieve the required number of parallel data streams, we should have independent columns of precoder $\mathbf{U}_{k}$ for all $k\in\mathcal {K}$. Without loss of generality we require the columns of $\mathbf{U}_{k}$ to be orthogonal. Whereas there is a transmit power constraint (\ref{eq:eq4.2}) for each user in (\ref{eq:eq4}), we have the power allocation among $d_{k}$ parallel data streams for User $k$. First, we modify our TSTINR model as follows:
\begin{eqnarray} \label{eq:eq39}
\hspace{-0.8cm}&\displaystyle\max_{\substack{\{\mathbf{U}\},\{\mathbf{V}\},\\\{\mathbf{W}\},\{\boldsymbol{\Phi}\}}}& \textrm{TSTINR}=
\frac{\sum_{k\in\mathcal {K}}P_{k}^{S}}{\sum_{k\in\mathcal {K}}(P_{k}^{I}+P_{k}^{N})}\nonumber\\
\hspace{-0.8cm}&\textrm{s.t.}& \mathbf{U}_{k}^{H}\mathbf{U}_{k}=\boldsymbol{\Phi}_{k},\mathbf{V}_{k}^{H}\mathbf{V}_{k}=\mathbf{I}_{d_{k}}, k\in\mathcal {K},\nonumber\\
\hspace{-0.8cm}&&\sum_{r\in\mathcal {R}}\!(\sum_{k\in\mathcal {K}}\!\|\mathbf{W}_{r}\mathbf{G}_{rk}\mathbf{U}_{k}\|_{F}^{2}
+\sigma_{1}^{2}\|\mathbf{W}_{r}\|_{F}^{2})\leq p_{\max}^{R},\nonumber\\
&&\textrm{tr}(\boldsymbol{\Phi}_{k})\leq p_{0}^{T}, \boldsymbol{\Phi}_{k}\hspace{0.1cm}\textrm{is diagonal}, \boldsymbol{\Phi}_{k}\succeq0, k\in\mathcal {K}.
\end{eqnarray}
Here $\boldsymbol{\Phi}_{k}$ is a $d_{k}\times d_{k}$ diagonal positive semi-definite matrix, which contains the data stream power allocation variable of User $k$.

From the optimization point of view, the feasible set of precoder $\mathbf{U}_{k}$ is restricted to have orthogonal columns, comparing with that of (\ref{eq:eq4}). This avoids the phenomenon observed in the solution of (\ref{eq:eq4}) that all columns of the rank one precoders $\mathbf{U}_{k}$ are nonzero but linearly dependent. Different from what has been mentioned in Theorem \ref{th:th3}, here the rank one case of $\mathbf{U}_{k}$ only happens when one diagonal element of $\boldsymbol{\Phi}_{k}$ is nonzero, which result in all columns of $\mathbf{U}_{k}$  but one are all zeros. Hence, we focus on the analysis of the subproblem to solve $\mathbf{U}_{k}$ as well as $\boldsymbol{\Phi}_{k}$. The reformulation of the objective function of (\ref{eq:eq39}) and the update strategy of the parameter $C$ are similar to the algorithm to solve (\ref{eq:eq4}). Given $k\in\mathcal {K}$, fixing all variables other than $\mathbf{U}_{k}$ and $\boldsymbol{\Phi}_{k}$, the precoder subproblem becomes:
\begin{subequations} \label{eq:eq40}
\begin{eqnarray}
&\displaystyle\min_{\mathbf{X}\in\mathbb{C}^{M_{k}\times
d_{k}},\boldsymbol{\Phi}_{k}\in\mathbb{C}^{d_{k}\times d_{k}}} &\textrm{tr}(\mathbf{X}^{H}\mathbf{Q}_{k}\mathbf{X})\\
&\textrm{s.t.}& \mathbf{X}^{H}\mathbf{X}=\boldsymbol{\Phi}_{k},\\
&&\textrm{tr}(\mathbf{X}^{H}\mathbf{L}_{k}\mathbf{X})\leq\eta_{2},\label{eq:eq40.1}\\
&&\textrm{tr}(\boldsymbol{\Phi}_{k})\leq p_{0}^{T}, \boldsymbol{\Phi}_{k}\hspace{0.1cm}\textrm{is diagonal}, \boldsymbol{\Phi}_{k}\succeq0,
\end{eqnarray}
\end{subequations}
where $\mathbf{X}$ represents $\mathbf{U}_{k}$.

\begin{dingli} \label{th:th4}
The optimal $\boldsymbol{\Phi}_{k}$ of (\ref{eq:eq40}) is of rank one, i.e., there is only one positive element on the diagonal of $\boldsymbol{\Phi}_{k}$.
\end{dingli}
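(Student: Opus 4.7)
My strategy is to mirror the SDP-relaxation argument used for Theorem \ref{th:th3}, now applied to the joint optimization over $\mathbf{X}$ and $\boldsymbol{\Phi}_{k}$ in (\ref{eq:eq40}). The idea is to exhibit a globally optimal point in which $\mathbf{X}$ has exactly one nonzero column, which automatically forces $\boldsymbol{\Phi}_{k}=\mathbf{X}^{H}\mathbf{X}$ to be rank one.

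The first step is to eliminate $\boldsymbol{\Phi}_{k}$ from the formulation: the constraints $\mathbf{X}^{H}\mathbf{X}=\boldsymbol{\Phi}_{k}$ with $\boldsymbol{\Phi}_{k}$ diagonal and PSD, together with $\textrm{tr}(\boldsymbol{\Phi}_{k})\leq p_{0}^{T}$, are equivalent to ``the columns of $\mathbf{X}$ are mutually orthogonal'' combined with $\|\mathbf{X}\|_{F}^{2}\leq p_{0}^{T}$, so (\ref{eq:eq40}) can be rewritten purely in $\mathbf{X}$. The second step is to drop the column-orthogonality and introduce $\mathbf{Z}=\mathbf{X}\mathbf{X}^{H}$; together with dropping the implicit bound $\textrm{rank}(\mathbf{Z})\leq d_{k}$ this relaxes the problem to the SDP
\begin{equation*}
\min_{\mathbf{Z}\succeq 0}\ \textrm{tr}(\mathbf{Q}_{k}\mathbf{Z})\quad \textrm{s.t.}\ \textrm{tr}(\mathbf{L}_{k}\mathbf{Z})\leq \eta_{2},\ \textrm{tr}(\mathbf{Z})\leq p_{0}^{T},
\end{equation*}
which is a lower bound on (\ref{eq:eq40}) and carries only two linear inequality constraints. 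Applying the same rank-reduction result of \cite{Huang} invoked in Theorem \ref{th:th3} then yields an optimal $\mathbf{Z}^{*}=\mathbf{z}^{*}(\mathbf{z}^{*})^{H}$ of rank one. The third step is to lift $\mathbf{z}^{*}$ back: set $\mathbf{X}^{*}=[\mathbf{z}^{*},\mathbf{0},\ldots,\mathbf{0}]\in\mathbb{C}^{M_{k}\times d_{k}}$. Its columns are trivially mutually orthogonal, so $\boldsymbol{\Phi}_{k}^{*}=\textrm{diag}(\|\mathbf{z}^{*}\|^{2},0,\ldots,0)$ is diagonal, PSD, and of rank one; the $\mathbf{L}_{k}$- and Frobenius-norm inequalities are inherited from $\mathbf{Z}^{*}$; and the objective value is $\textrm{tr}(\mathbf{Q}_{k}\mathbf{Z}^{*})$, matching the SDP lower bound. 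This certifies $(\mathbf{X}^{*},\boldsymbol{\Phi}_{k}^{*})$ as globally optimal for (\ref{eq:eq40}), with a rank-one $\boldsymbol{\Phi}_{k}^{*}$.

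The main thing that still requires care is confirming that the relaxation truly retains only two linear constraints so that Huang's rank-one guarantee fires. The diagonality of $\boldsymbol{\Phi}_{k}$ is precisely what is being relaxed when orthogonality is dropped, the positive semi-definiteness of $\boldsymbol{\Phi}_{k}$ is automatic from $\mathbf{X}^{H}\mathbf{X}\succeq 0$, and the Hermitian $\mathbf{Q}_{k}$ is allowed to be indefinite by the rank-reduction statement; at the SDP optimum any inactive inequality can be dropped, leaving at most two effective equalities. Once this bookkeeping is in place, the argument proceeds almost verbatim as in Theorem \ref{th:th3}, with the crucial difference that the orthogonality structure of (\ref{eq:eq40}) rules out the ``parallel-columns'' lift $[a_{1}\mathbf{z}^{*},\ldots,a_{d_{k}}\mathbf{z}^{*}]$ used there, and hence genuinely forces a single nonzero column in $\mathbf{X}^{*}$.
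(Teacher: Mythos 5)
Your proof is correct, but it follows a genuinely different route from the paper's. The paper argues through the optimality conditions of (\ref{eq:eq40}): it introduces the multiplier $\theta^{*}$ of the constraint (\ref{eq:eq40.1}), substitutes $\mathbf{Y}=\mathbf{X}\boldsymbol{\Phi}_{k}^{-1/2}$ to reach an equivalent problem with orthonormal $\mathbf{Y}$, expands the columns of $\mathbf{Y}$ in the eigenbasis of $\mathbf{R}=\mathbf{Q}_{k}+\theta^{*}\mathbf{L}_{k}$, and observes that the objective collapses to $\sum_{i}\Phi_{ii}\,\mathbf{y}_{i}^{H}\mathbf{R}\mathbf{y}_{i}$, a linear program in the per-stream powers $\Phi_{ii}$ whose optimum puts the entire budget $p_{0}^{T}$ on the single smallest available eigenvalue. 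You instead relax to an SDP in $\mathbf{Z}=\mathbf{X}\mathbf{X}^{H}$ with two linear inequality constraints, invoke the rank-one decomposition of \cite{Huang} as in Theorem \ref{th:th3}, and lift the rank-one $\mathbf{Z}^{*}$ back via a single nonzero column --- correctly noting that the orthogonality of columns rules out the parallel-columns lift used in Theorem \ref{th:th3}, so the rank-one $\mathbf{Z}^{*}$ genuinely forces a rank-one $\boldsymbol{\Phi}_{k}^{*}$. Your version is shorter, reuses the Theorem \ref{th:th3} machinery, and certifies global optimality by matching a convex lower bound without any multiplier bookkeeping; the one point you should make explicit is that the cited rank-one result is stated for equality-constrained SDPs, and carries over to your inequality-constrained relaxation because the decomposition distributes the two constraint functionals proportionally over the rank-one terms, so the rescaled best term inherits both inequalities. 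The paper's argument buys something different: it exhibits the optimizer explicitly (the minimum eigenvector of $\mathbf{Q}_{k}+\theta^{*}\mathbf{L}_{k}$ carrying all the power) and isolates the mechanism --- linearity of the objective in the power allocation --- which is precisely what motivates the equal-power constraint $\mathbf{U}_{k}^{H}\mathbf{U}_{k}=\frac{p_{0}^{T}}{d_{k}}\mathbf{I}_{d_{k}}$ adopted in the multiple-stream model (\ref{eq:eq32}). Both proofs establish existence of a rank-one optimum (and share the same harmless degeneracy caveats when eigenvalues tie), which is all the theorem is used for.
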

The detailed proof is shown in Appendix-C. As described in Theorem \ref{th:th4}, at the optimal solution of (\ref{eq:eq40}) the complete transmit power should be assigned to one data stream. This leads to $\textrm{rank}(\mathbf{U}_{k})=1$ and thus only one data stream can be transmitted for each user.

\subsubsection{New model and the algorithm framework}
In our new model for multiple stream case, without transmit power optimization, we assume each user has fixed transmit power $p_{0}^{T}$, and require equal power allocation among parallel data streams for each user. This choice accords with the optimal power allocation scheme to maximize the system sum rate in the high SNR scenario \cite{Rhee}.
Suppose User $k$ has $d_{k}$ parallel data streams, and the corresponding optimization problem of the new model becomes:
\begin{eqnarray} \label{eq:eq32}
\hspace{-0.8cm}&\displaystyle\max_{\substack{\{\mathbf{U}\},\{\mathbf{V}\},\\\{\mathbf{W}\}}}& \textrm{TSTINR}=
\frac{\sum_{k\in\mathcal {K}}P_{k}^{S}}{\sum_{k\in\mathcal {K}}(P_{k}^{I}+P_{k}^{N})}\nonumber\\
\hspace{-0.8cm}&\textrm{s.t.}& \mathbf{U}_{k}^{H}\mathbf{U}_{k}=\frac{p_{0}^{T}}{d_{k}}\mathbf{I}_{d_{k}},\mathbf{V}_{k}^{H}\mathbf{V}_{k}=\mathbf{I}_{d_{k}}, k\in\mathcal {K},\nonumber\\
\hspace{-0.8cm}&&\sum_{r\in\mathcal {R}}\!(\sum_{k\in\mathcal {K}}\!\|\mathbf{W}_{r}\mathbf{G}_{rk}\mathbf{U}_{k}\|_{F}^{2}
+\sigma_{1}^{2}\|\mathbf{W}_{r}\|_{F}^{2})\leq p_{\max}^{R}.
\end{eqnarray}
Similar to (\ref{eq:eq4}), we reformulate the objective function of (\ref{eq:eq32}) with parameter $C$, which adopts
the update strategy (\ref{eq:eq33}), and becomes $f(\{\mathbf{U}\},\{\mathbf{V}\},\{\mathbf{W}\};C)
=C(P^{I}+P^{N})-P^{S}$. We apply the alternating minimization method to solve precoders, decoders and relay beamforming matrices in the reformulated problem. The subproblems for decoders
$\mathbf{V}_{k}, k\in\mathcal {K}$ are the same as (\ref{eq:eq7}) in Section \ref{sec:sec2}.

For any $r\in\mathcal {R}$, the subproblem for $\mathbf{W}_{r}$ while fixing all other variables is reformulated as (\ref{eq:eq10}) with inequality constraint. Then it is equivalent to the typical trust region subproblem
(\ref{eq:eq17}), and solved by TR method in \cite{Yuan}.

For the precoder $\mathbf{U}_{k}$, the corresponding subproblem
becomes following, while fixing $\mathbf{V}_{q}, q\in\mathcal {K}$, $\mathbf{W}_{r}, r\in\mathcal {R}$ and $\{\mathbf{U}_{-k}\}$:
\begin{subequations} \label{eq:eq35}
\begin{eqnarray}
&\displaystyle\min_{\mathbf{X}\in\mathbb{C}^{M_{k}\times
d_{k}}} &\textrm{tr}(\mathbf{X}^{H}\mathbf{Q}_{k}\mathbf{X})\\
&\textrm{s.t.}& \mathbf{X}^{H}\mathbf{X}=\frac{p_{0}^{T}}{d_{k}}\mathbf{I}_{d_{k}},\\
&&\textrm{tr}(\mathbf{X}^{H}\mathbf{L}_{k}\mathbf{X})\leq\eta_{2},\label{eq:eq35.1}
\end{eqnarray}
\end{subequations}
where $\mathbf{X}$ represents the variable $\mathbf{U}_{k}$, and $\mathbf{Q}_{k}$, $\mathbf{L}_{k}$ and $\eta_{2}$ are mentioned just after the equation of (\ref{eq:eq11}).
Before presenting the algorithm to solve (\ref{eq:eq35}), we first show its global optimality conditions:
\begin{dingli} \label{th:th5}
The global optimality conditions for (\ref{eq:eq35}) are stated as follows.
There exists $\mu^{*}\geq0$ as the Lagrange multiplier of (\ref{eq:eq35.1}), such that:\\
\begin{description}
\item[OC1] $\mathbf{X}^{*}(\mu^{*})=\sqrt{\frac{p_{0}^{T}}{d_{k}}}\nu_{\min}^{d_{k}}(\mathbf{Q}_{k}+\mu^{*}\mathbf{L}_{k})$ is the optimal solution for:
\begin{eqnarray} \label{eq:eq38}
\displaystyle\min_{\mathbf{X}^{H}\mathbf{X}=\frac{p_{0}^{T}}{d_{k}}\mathbf{I}_{d_{k}}} \textrm{tr}[\mathbf{X}^{H}(\mathbf{Q}_{k}+\mu^{*}\mathbf{L}_{k})\mathbf{X}].
\end{eqnarray}
\item[OC2] Complimentary condition holds: $\mu^{*}\{\textrm{tr}[(\mathbf{X}^{*})^{H}\mathbf{L}_{k}\mathbf{X}^{*}]-\eta_{2}\}=0$.
\item[OC3] $c(\mu)$ as the function of $\mu$ satisfies (\ref{eq:eq35.1}):
\begin{center}$c(\mu^{*})=\textrm{tr}\{[\mathbf{X}^{*}(\mu^{*})]^{H}\mathbf{L}_{k}\mathbf{X}^{*}(\mu^{*})\}\leq \eta_{2}$.\end{center}
\end{description}
\end{dingli}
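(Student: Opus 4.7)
The plan is to prove global optimality via partial Lagrangian duality, dualizing only the inequality constraint (\ref{eq:eq35.1}) while keeping the scaled Stiefel constraint $\mathbf{X}^H\mathbf{X} = \frac{p_0^T}{d_k}\mathbf{I}_{d_k}$ embedded inside the subproblem. For any $\mu \geq 0$ I would introduce the partial Lagrangian
$$
L(\mathbf{X}, \mu) = \textrm{tr}[\mathbf{X}^H(\mathbf{Q}_k + \mu\mathbf{L}_k)\mathbf{X}] - \mu\eta_2,
$$
and consider the auxiliary problem $\min_{\mathbf{X}^H\mathbf{X} = (p_0^T/d_k)\mathbf{I}_{d_k}} L(\mathbf{X}, \mu)$. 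Since $\mathbf{Q}_k + \mu\mathbf{L}_k$ is Hermitian, the Ky Fan / Brockett trace inequality, which is the same eigenvalue characterization already used to solve the decoder subproblem (\ref{eq:eq7}), implies that this minimum is attained at $\mathbf{X}^*(\mu) = \sqrt{p_0^T/d_k}\, \nu_{\min}^{d_k}(\mathbf{Q}_k + \mu\mathbf{L}_k)$, regardless of the definiteness of $\mathbf{Q}_k + \mu\mathbf{L}_k$. This justifies the explicit form of $\mathbf{X}^*(\mu^*)$ asserted in OC1.

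The sufficiency of OC1--OC3 then follows from a short weak-duality chain. Suppose $(\mu^*, \mathbf{X}^*)$ satisfies all three conditions. OC3 together with the Stiefel constraint built into the formula for $\mathbf{X}^*(\mu^*)$ in OC1 shows that $\mathbf{X}^*$ is feasible for (\ref{eq:eq35}). For any other feasible $\mathbf{X}$, using $\mu^* \geq 0$ and $\textrm{tr}(\mathbf{X}^H\mathbf{L}_k\mathbf{X}) - \eta_2 \leq 0$,
$$
\textrm{tr}(\mathbf{X}^H\mathbf{Q}_k\mathbf{X}) \geq \textrm{tr}(\mathbf{X}^H\mathbf{Q}_k\mathbf{X}) + \mu^*[\textrm{tr}(\mathbf{X}^H\mathbf{L}_k\mathbf{X}) - \eta_2] = L(\mathbf{X}, \mu^*).
$$
OC1 yields $L(\mathbf{X}, \mu^*) \geq L(\mathbf{X}^*, \mu^*)$, while the complementary condition OC2 collapses $L(\mathbf{X}^*, \mu^*)$ to $\textrm{tr}[(\mathbf{X}^*)^H\mathbf{Q}_k\mathbf{X}^*]$. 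Chaining these inequalities gives $\textrm{tr}(\mathbf{X}^H\mathbf{Q}_k\mathbf{X}) \geq \textrm{tr}[(\mathbf{X}^*)^H\mathbf{Q}_k\mathbf{X}^*]$, so $\mathbf{X}^*$ is globally optimal.

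For the necessity direction, I would analyze $c(\mu) = \textrm{tr}\{[\mathbf{X}^*(\mu)]^H\mathbf{L}_k\mathbf{X}^*(\mu)\}$ as a function of $\mu \geq 0$: this function is nonincreasing, since enlarging $\mu$ forces the minimizing eigenspace away from directions with large $\mathbf{L}_k$-energy. If $c(0) \leq \eta_2$, then $\mu^* = 0$ already satisfies OC1--OC3; otherwise, monotonicity and continuity produce some $\mu^* > 0$ with $c(\mu^*) = \eta_2$, yielding OC2 and OC3 simultaneously. The main obstacle is the behavior of $c(\mu)$ at its jump discontinuities, which arise when the $d_k$-th and $(d_k+1)$-th eigenvalues of $\mathbf{Q}_k + \mu\mathbf{L}_k$ cross: at such a $\mu^*$ one must select $\mathbf{X}^*$ from a convex combination of eigenbases associated with the crossing eigenvalues so as to enforce $c(\mu^*) = \eta_2$ exactly. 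This is the classical \emph{hard case} of trust-region theory and requires careful handling, but once resolved, the resulting pair $(\mu^*, \mathbf{X}^*)$ satisfies OC1--OC3 and is globally optimal by the sufficiency direction already proved.
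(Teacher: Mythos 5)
Your sufficiency argument is mathematically identical to the paper's: the paper proves the same implication by contradiction, splitting into the cases $\mu^{*}=0$ and $\mu^{*}>0$ and using exactly the inequalities $\mu^{*}\,\textrm{tr}(\mathbf{X}^{H}\mathbf{L}_{k}\mathbf{X})\leq\mu^{*}\eta_{2}=\mu^{*}\,\textrm{tr}[(\mathbf{X}^{*})^{H}\mathbf{L}_{k}\mathbf{X}^{*}]$ and the optimality of $\mathbf{X}^{*}$ for (\ref{eq:eq38}); your weak-duality chain is just the contrapositive packaging of the same steps, and the Ky Fan characterization of the minimizer over the scaled Stiefel manifold is likewise what the paper invokes (via $\nu_{\min}^{d_{k}}$) to justify OC1. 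Where you depart from the paper is in the necessity direction: the paper's proof of the theorem stops at sufficiency and defers the existence of $\mu^{*}$ to the informal discussion after the proof, where it asserts that $c(\mu)$ is continuous and applies the intermediate value theorem. Your observation that $c(\mu)$ can jump when the $d_{k}$-th and $(d_{k}+1)$-th eigenvalues of $\mathbf{Q}_{k}+\mu\mathbf{L}_{k}$ cross --- the trust-region hard case, where one must mix eigenvectors from the crossing eigenspaces to hit $c(\mu^{*})=\eta_{2}$ exactly --- is a genuine refinement that the paper glosses over; as stated you have only sketched its resolution, but since the theorem itself asserts only the optimality conditions (which the sufficiency argument fully establishes), this does not affect the correctness of your proof of the stated result.
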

\begin{proof}
Suppose there exists $\mathbf{X}_{0}$ as a feasible point of (\ref{eq:eq35}), which satisfies: $\textrm{tr}(\mathbf{X}_{0}^{H}\mathbf{Q}_{k}\mathbf{X}_{0})<\textrm{tr}[(\mathbf{X}^{*})^{H}\mathbf{Q}_{k}\mathbf{X}^{*}]$.\\
If $\mu^{*}=0$, then from OC1, $\mathbf{X}^{*}$ is the global optimal solution of the relaxed problem of (\ref{eq:eq35}) by dropping the constraint (\ref{eq:eq35.1}). Thus for $\mathbf{X}$ as any feasible point of (\ref{eq:eq35}), it holds that
$\textrm{tr}[(\mathbf{X}^{*})^{H}\mathbf{Q}_{k}\mathbf{X}^{*}]\leq\textrm{tr}(\mathbf{X}^{H}\mathbf{Q}_{k}\mathbf{X})$,
which contradicts the assumption.\\
If $\mu^{*}>0$, then $\textrm{tr}[(\mathbf{X}^{*})^{H}\mathbf{L}_{k}\mathbf{X}^{*}]=\eta_{2}$ holds from OC2. Thus we deduce
$
\mu^{*}\textrm{tr}(\mathbf{X}_{0}^{H}\mathbf{L}_{k}\mathbf{X}_{0})\leq\mu^{*}\eta_{2}=\mu^{*}\textrm{tr}[(\mathbf{X}^{*})^{H}\mathbf{L}_{k}\mathbf{X}^{*}]
$. Then we have
$\textrm{tr}[\mathbf{X}_{0}^{H}(\mathbf{Q}_{k}+\mu^{*}\mathbf{L}_{k})\mathbf{X}_{0}]<
\textrm{tr}[(\mathbf{X}^{*})^{H}(\mathbf{Q}_{k}+\mu^{*}\mathbf{L}_{k})\mathbf{X}^{*}]$,
which contradicts the fact that $\mathbf{X}^{*}$ is the global optimal solution of (\ref{eq:eq38}).
For both cases we have proved that the assumption for $\mathbf{X}_{0}$ does not hold. Thus $\mathbf{X}^{*}$ is the global optimal solution of (\ref{eq:eq35}).
\end{proof}
With eigenvalue decomposition, we obtain the optimal solution of (\ref{eq:eq38}) as $\mathbf{X}^{*}(\mu^{*})=
\sqrt{\frac{p_{0}^{T}}{d_{k}}}\nu_{\min}^{d_{k}}(\mathbf{Q}_{k}+\mu^{*}\mathbf{L}_{k})$. We want to obtain $\mu^{*}\geq0$, to satisfy OC2 and OC3. If $c(0)\leq\eta_{2}$, then $\mu=0$ is the optimal Lagrange multiplier. Otherwise $\mu=0$ cannot be optimal and
$\mu^{*}$ should be strictly greater than $0$. Thus we should always have $c(\mu^{*})=\textrm{tr}[(\mathbf{X}^{*})^{H}\mathbf{L}_{k}\mathbf{X}^{*}]
= \eta_{2}$ from OC2. Also from the constraint (\ref{eq:eq35.1}) we should have $c(\infty)\leq\eta_{2}$ for a feasible problem. Then with
$c(\mu)$ as a continuous function, there exists $\mu^{*}\in(0,\infty)$ such that $c(\mu^{*})=\eta_{2}$. Thus we use Newton's root finding method \cite{Yuan} to
search for $\mu^{*}$.

The algorithm to solve subproblem (\ref{eq:eq35}) is summarized as follows:

\begin{algorithm}
\SetKwInOut{Input}{input}
\SetKwInOut{Output}{output}

\Input{$\mu=0$}
\Output{the optimal solution of (\ref{eq:eq35}): $\mathbf{X}^{*}(\mu^{*})=
\sqrt{\frac{p_{0}^{T}}{d_{k}}}\nu_{\min}^{d_{k}}(\mathbf{Q}_{k}+\mu^{*}\mathbf{L}_{k})$}
\eIf{$c(\mu)\leq\eta_{2}$}{
Set $\mu^{*}=0$\;}{
Apply Newton's root finding method to solve $c(\mu^{*})=\eta_{2}$\;
}
\end{algorithm}

\vspace{5cm}
With the methods for all the three subproblems, the algorithm for (\ref{eq:eq32}) is presented as Algorithm 3:

\begin{algorithm}
\SetKwInOut{Input}{input}
\SetKwInOut{Output}{output}
\caption{Algorithm for multiple stream TSTINR model}

\Input{initial value of $\mathbf{U}_{k},k\in\mathcal {K}$ and $\mathbf{W}_{r},r\in\mathcal {R}$, $C=1$}
\Output{$\mathbf{U}_{k},\mathbf{V}_{k},k\in\mathcal {K}$ and $\mathbf{W}_{r},r\in\mathcal {R}$}
\Repeat{Convergence}{
Update decoder $\mathbf{V}_{k}$ by solving (\ref{eq:eq7}), $k\in\mathcal {K}$\;
Update relay beamforming matrix $\mathbf{W}_{r}$ by solving
(\ref{eq:eq9}) with inequality constraint, $r\in\mathcal {R}$\;
Update precoder $\mathbf{U}_{k}$ by solving (\ref{eq:eq35}), $k\in\mathcal {K}$\;
Update $C$ as $C:=\frac{P^{S}}
{P^{I}+P^{N}}$\;
}
\end{algorithm}

By enforcing the orthogonality constraints to the columns of each precoder, it is guaranteed that $\textrm{rank}(\mathbf{U}_{k})=d_{k}, k\in\mathcal {K}$. Thus each user pair has $d_{k}$ parallel data streams as expected.

\section{Simulations} \label{sec:sec3}
In this section, we evaluate the performances of our proposed algorithms. Simulations include two parts, where the single stream and the multiple streams models are analyzed, respectively. In both parts, each element of $\mathbf{G}_{rk}$ and $\mathbf{H}_{kr},k\in\mathcal {K},r\in\mathcal {R}$ are generated as i.i.d complex Gaussian distribution with zero mean and unit variance. The noise variances are set as $\sigma_{1}^{2}=\sigma_{2}^{2}=\sigma^{2}=1$. Initial values of $\mathbf{U}_{k},k\in\mathcal {K}$ and $\mathbf{W}_{r},r\in\mathcal {R}$ are randomly generated, and scaled to be feasible. Initially, the parameter in TSTINR model is set as $C=1$. For each plotted point, $100$ random realization of different channel\break coefficients are generated to evaluate the average performance.

Here we define SNR as SNR$=\frac{p_{0}^{T}}{\sigma^{2}}=\frac{p_{0}^{R}}{\sigma^{2}}$, and $p_{\max}^{R}=R\cdot p_{0}^{R}$. We use system sum rate $R_{\textrm{sum}}$ as the measure of QoS. We use Matlab simulation server R2010a based on an INTEL® Core i7-875K CPU with 8 GB RAM with the operating system 64 bits Debian Linux to run all the simulations.

\subsection{Single stream models}
In this subsection, we consider a $(4\times2,1)^{4}+2^{4}$ MIMO relay system, which means $M_{k}=4,N_{k}=2,L_{r}=4,d_{k}=1$, for all $k\in\mathcal {K}$ and $r\in\mathcal {R}$, and $K=R=4$.
First, we analyze models with per user and total relay power constraints. Our algorithm for TSTINR model is compared with those of the TLIN model and the WMMSE model with power control in \cite{Truong}. Here we use the Sedumi toolbox to solve the subproblems with SDP algorithm in WMMSE. For the TLIN model, we apply similar algorithm as proposed for TSTINR to speed up.
\begin{figure}[htb]

\begin{minipage}[b]{1.0\linewidth}
  \centering
  \centerline{\includegraphics[width=8cm]{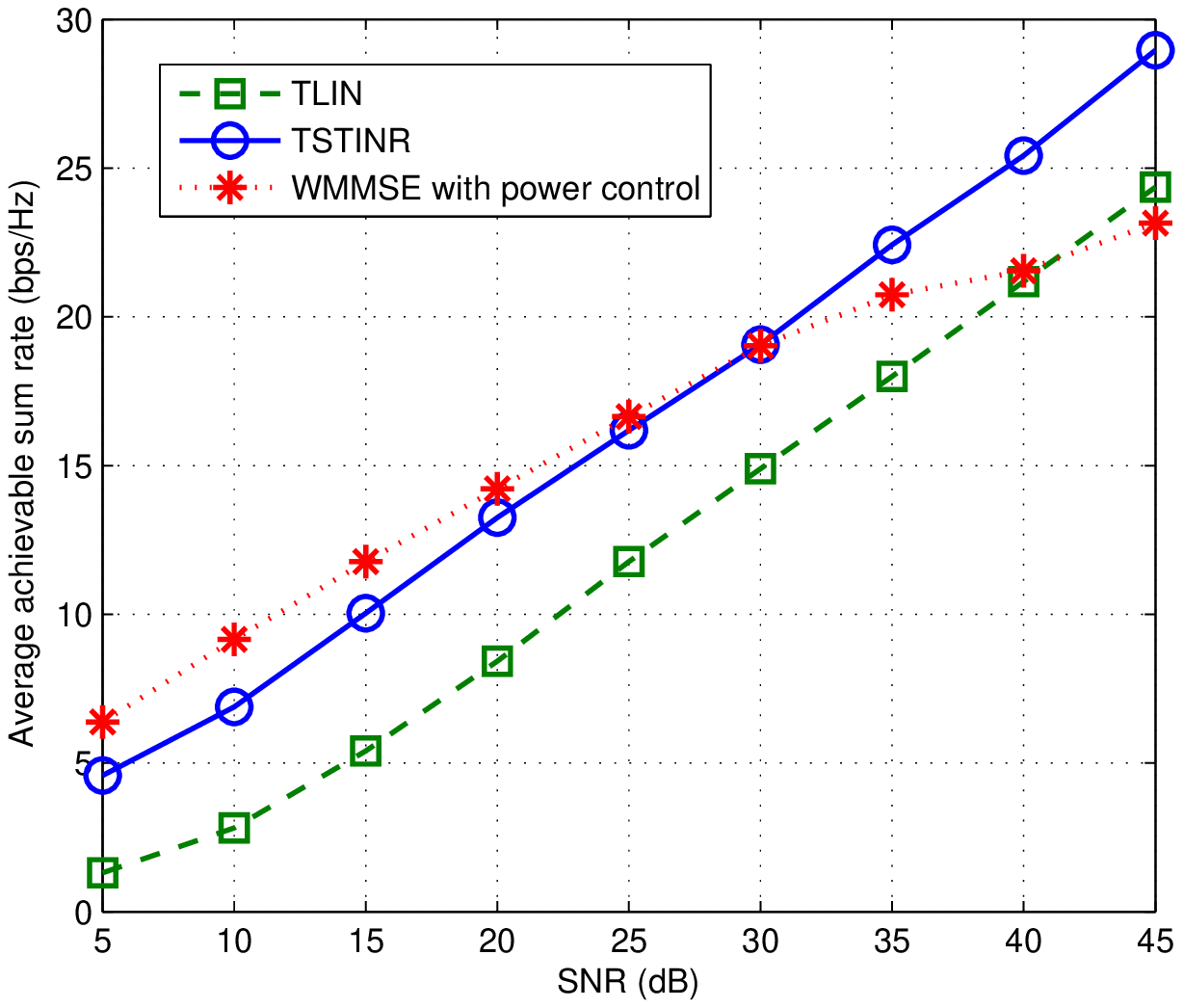}}
  \center{Fig. 2 Average achievable sum rate versus SNR,\\ total relay power constraint}\medskip
\end{minipage}
\begin{minipage}[b]{1.0\linewidth}
  \centering
  \centerline{\includegraphics[width=8cm]{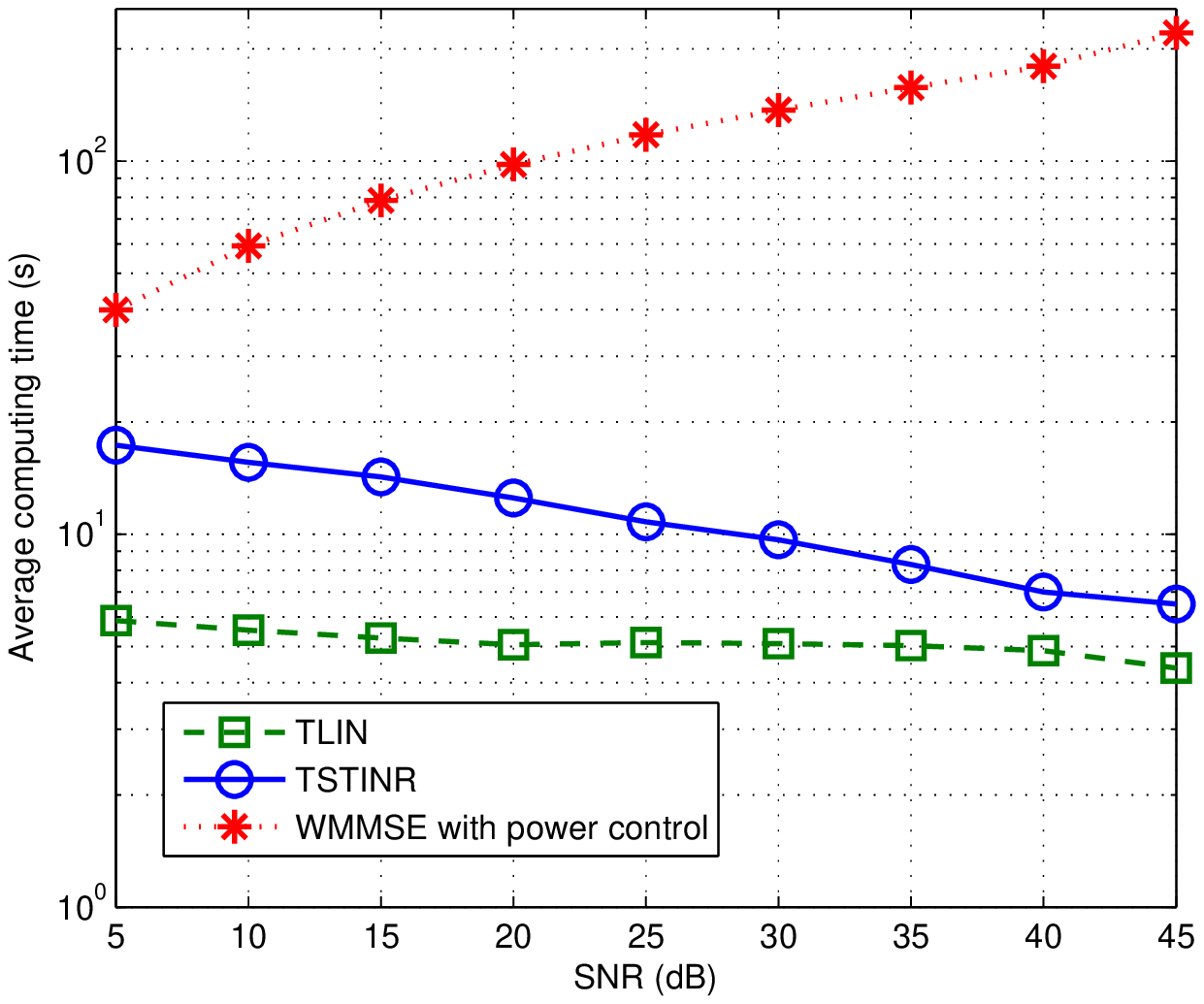}}
  \center{Fig. 3 Average computing time versus SNR,\\ total relay power constraint}\medskip
\end{minipage}
\end{figure}

Fig. 2 and Fig. 3 depict the achieved sum rate and the computing time with respect to different SNR values for the three algorithms, respectively. From the aspect of the achieved sum rate, TSTINR outperforms TLIN for general SNR values as expected, and outperforms WMMSE in medium and high SNR scenarios. Furthermore, the computing time of TSTINR and TLIN are much less than that of WMMSE, which accords with the computational complexity analysis. By numerical evidence, the computing time of WMMSE increases with increasing SNR value, mainly because the rank-one solution is less frequently observed for SDP, and the relaxation technique \cite{Ai} has to be used.

Then we compare the algorithms for models with individual user and individual relay power constraints. Fig. 4 depicts the average achievable sum rate by our proposed algorithms for TLIN, TSTINR and WMMSE model with fixed transmit power, as well as the algorithm in \cite{Truong} for the WMMSE model with power control. It is shown that WMMSE with power control enjoys higher sum rate than WMMSE with fixed power constraints generally. Thus system sum rate benefits substantially from power control. Even so, our proposed TSTINR with fixed power constraints outperforms WMMSE with power control for medium to high SNR scenarios. Also, TSTINR outperforms TLIN in general. Here the complexity as well as the computing time of the TLIN, TSTINR and WMMSE with power control behave similar to the algorithms in Section \ref{sec:sec2}.
\begin{figure}[htb]

\begin{minipage}[b]{1.0\linewidth}
  \centering
  \centerline{\includegraphics[width=8cm]{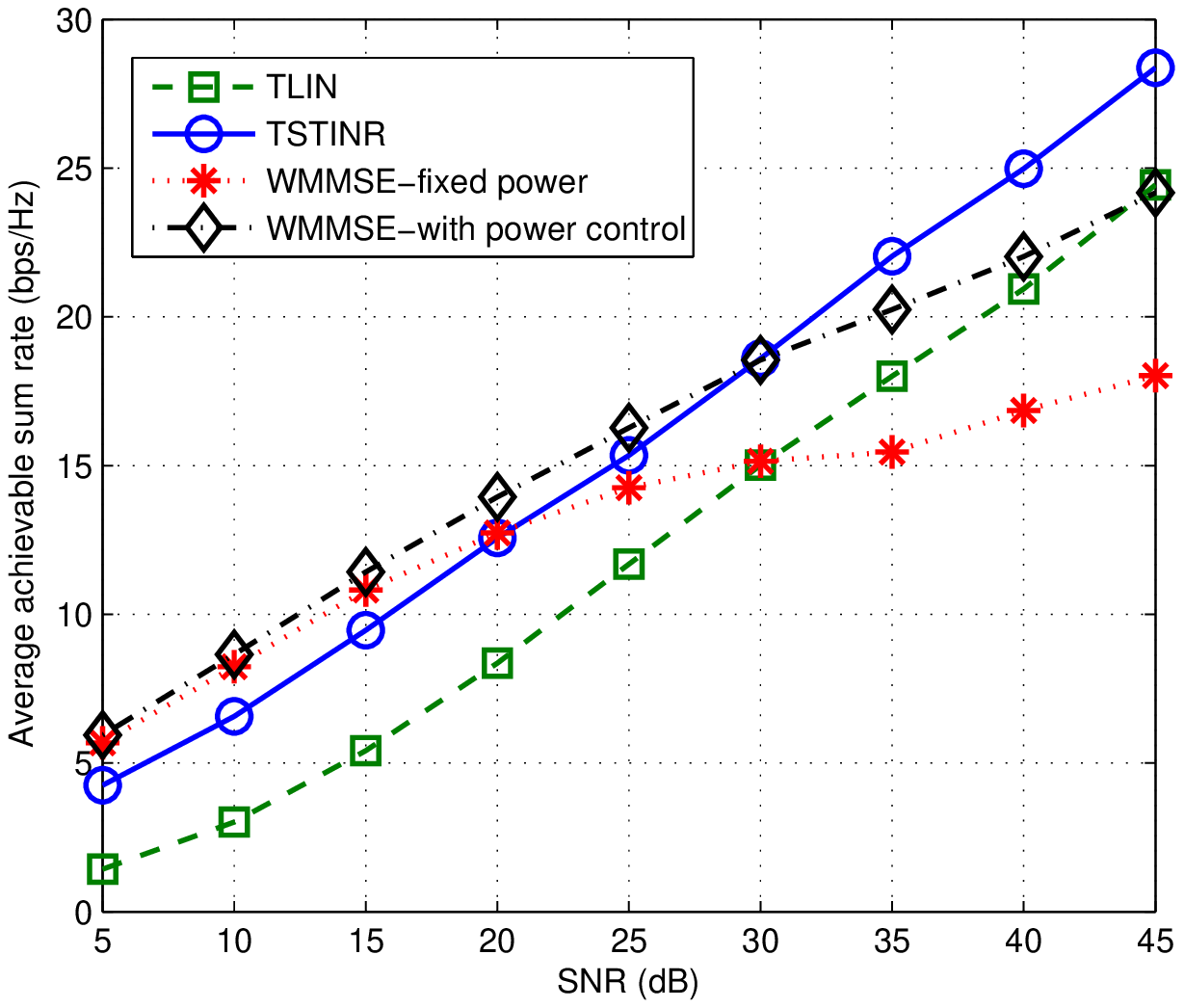}}
  \center{Fig. 4 Average achievable sum rate versus SNR,\\ individual relay power constraint}\medskip
\end{minipage}
\end{figure}

The comparison with both total relay and individual relay constraints cases are similar. In both Fig. 2 and Fig. 4 the curves representing TSTINR and WMMSE with power control have intersection point at about SNR$=30$dB. With SNR value higher than $30$dB, our proposed TSTINR algorithm outperforms the WMMSE algorithm with power control. The performance confirms the analysis: in low SNR scenarios MMSE receiver filter are almost optimal considering linear filter, while in high SNR scenarios the receiver filter should be close to zero-forcing solution, which can be achieved by TSTINR.

\subsection{Multiple stream model}
In this subsection we consider multiple stream models of the MIMO relay networks. We investigate three kinds of $2\times2\times2$ networks with different number of antennas, that is, $K=R=2$.. Here for the scheme $d_{1}=d_{2}=1$ we choose the maximum system sum rate results between our TSTINR model and the WMMSE model with power control in \cite{Truong}.

First we consider a network with $2$ antennas for each user and $4$ antennas for each relay. The number of data streams for User $k$, $d_{k}$, varies from $1$ to $2$ for both $k=1,2$. For different choices of $d_{k}, k=1,2$ the average sum rate results corresponding to different SNR values are shown in Fig. 5. As expected, in the low SNR scenario the single stream scheme with $d_{1}=d_{2}=1$ outperforms other schemes; in medium to high SNR scenarios, the scheme $d_{1}=d_{2}=2$ becomes dominant and the scheme $d_{1}=d_{2}=1$ performs worse than all others, in terms of sum rate.
\begin{figure}[htb]

\begin{minipage}[b]{1.0\linewidth}
  \centering
  \centerline{\includegraphics[width=8cm]{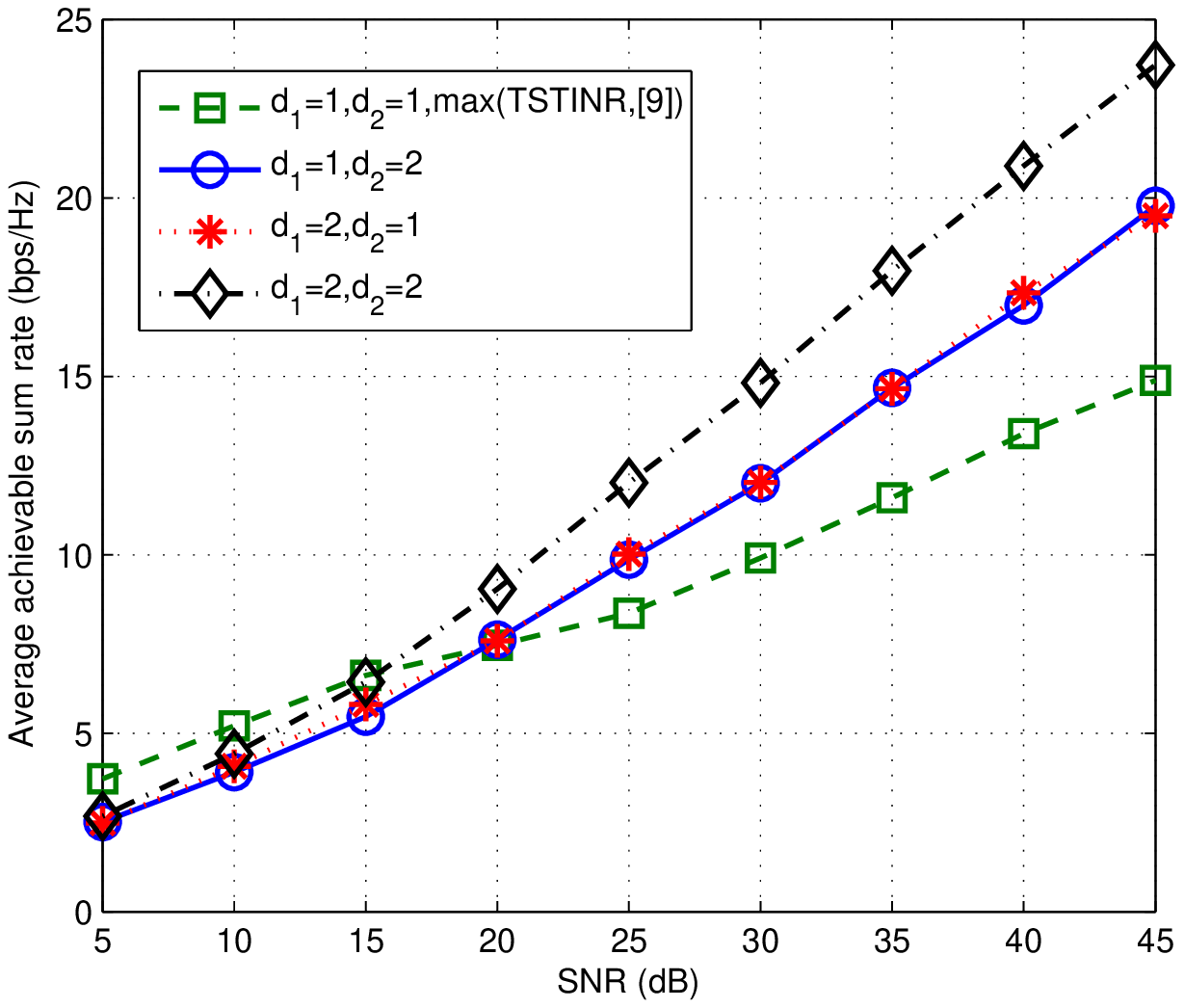}}
  \center{Fig. 5 Average achievable sum rate versus SNR, $M_{k}=N_{k}=2,L_{k}=4$}\medskip
  \end{minipage}
  \begin{minipage}[b]{1.0\linewidth}
  \centering
  \centerline{\includegraphics[width=8cm]{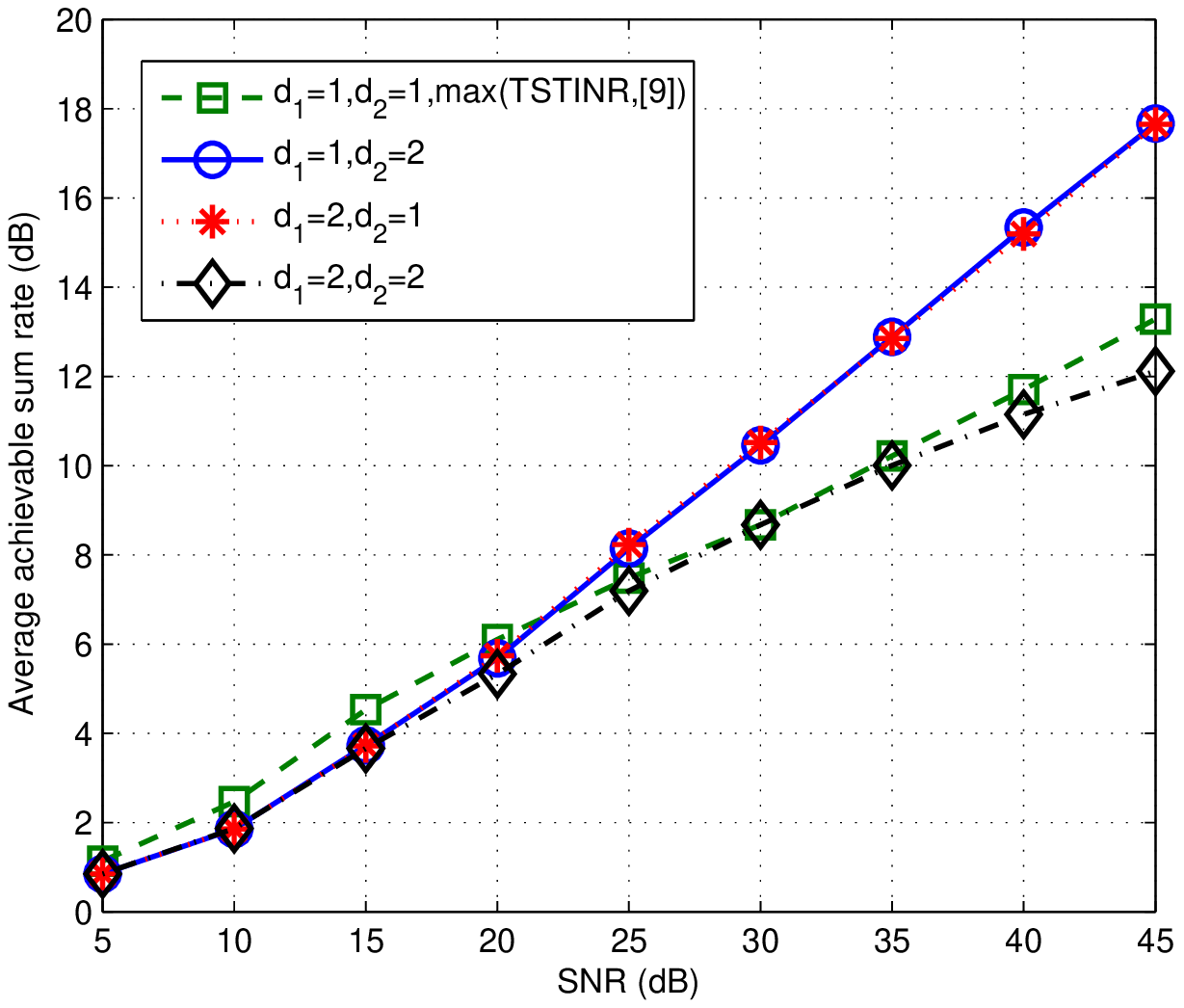}}
  \center{Fig. 6 Average achievable sum rate versus SNR, $M_{k}=N_{k}=L_{k}=2$}\medskip
\end{minipage}
\end{figure}

In the second example, the considered network has the same parameters as the previous one, except that each relay owns $2$ antennas. Similar to Fig. 5, the average achieved sum rate results corresponding to different SNR values for different requirements of data streams are shown in Fig. 6. The curves are quite different from the previous example. Here the schemes $d_{1}=1,d_{2}=2$ and $d_{1}=2,d_{2}=1$ outperform the other two schemes in medium to high SNR scenarios. And generally the scheme $d_{1}=d_{2}=2$ performs very bad.
%

The performances shown in Fig. 5 and Fig. 6 indicate similar behavior as the recent theoretical result on DoF of MIMO relay networks. From the cut-set bound theory, the maximum DoFs of the first network is no greater than $2$ for each user \cite[Theorem 15.1]{Gamal}. And simulations verify the benefit to transmit $2$ data streams for each user over other schemes in Fig. 5. However in the second example, there is no extra relay antenna to align interference besides transmitting the desired signal. Without symbol extension or time division, $2$ DoFs for each user is not achievable. This accords with the performances in Fig. 6. In \cite{Gou} the authors show that $\frac{3}{2}$ DoFs for each user are achievable in $2\times2\times2$ network with $2$ antennas for each user and each relay. And the essential idea of the transmission scheme is to sacrifice one data stream for interference and make full use of all other streams. Correspondingly, in Fig. 6 for the medium to high SNR scenarios the schemes $d_{1}=1,d_{2}=2$ and $d_{1}=2,d_{2}=1$ perform the best. This indicates substantial benefit of system sum rate from such schemes. This is also verified in the third example with $3$ antennas for each user and each relay. The comparison of different data stream schemes are depicted in Fig. 7, where the scheme $d_{1}=2,d_{2}=3$ achieves the highest sum rate among all the schemes in medium to high SNR. In general, multiple stream schemes improve the system sum rate in medium to high SNR scenarios.

\begin{figure}[htb]

\begin{minipage}[b]{1.0\linewidth}
  \centering
  \centerline{\includegraphics[width=8cm]{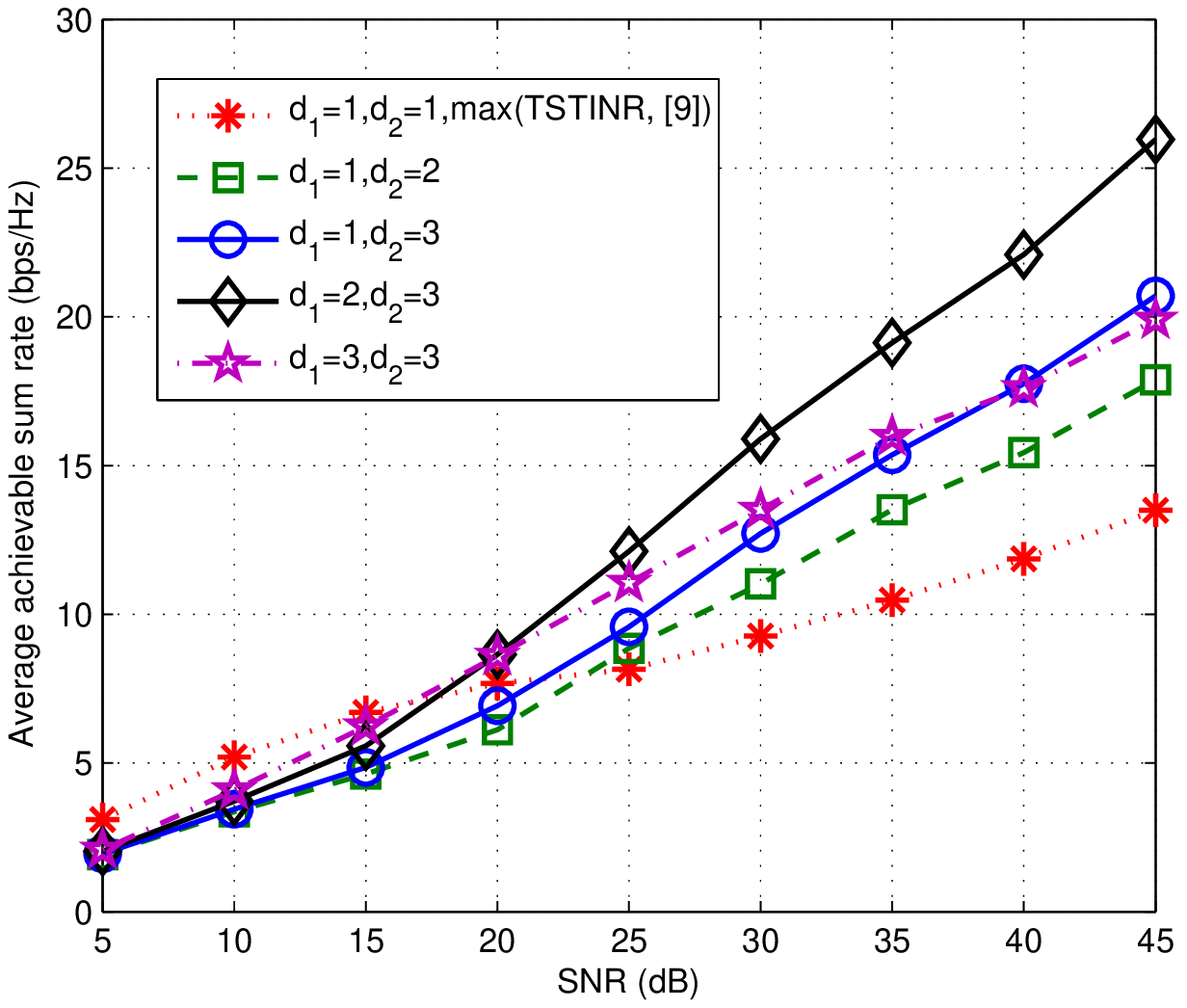}}
  \center{Fig. 7 Average achievable sum rate versus SNR, $M_{k}=N_{k}=L_{k}=3$}\medskip
\end{minipage}
\end{figure}

\section{Conclusion}
This paper considered the general $K\times R\times K$ MIMO relay network. With per user and total relay power constraints, an algorithm for the TSTINR model was proposed. For the individual user and individual relay fixed power constraints, the TSTINR algorithm was extended, and the algorithms for the TLIN and WMMSE model in \cite{Truong} were also modified to solve the problem with such constraints. Computational complexity analysis showed that our proposed algorithm for TSTINR has much lower complexity than the WMMSE algorithm from \cite{Truong}. Focusing on per user and total relay power constraints, we proposed a multiple stream TSTINR model, to overcome the disadvantages of the previous models that only single data stream can be transmitted for each user. Simulations showed that for single stream case TSTINR algorithm performs better than WMMSE in medium and high SNR scenarios for both constraint cases, and outperforms TLIN generally, in terms of achievable sum rate; the system sum rate significantly benefits from the multiple data stream transmission in medium to high SNR scenarios.

\section{Appendix}
\subsection{Proof of Theorem \ref{th:th1}}

First we introduce two lemmas for the proof of Theorem \ref{th:th1}.
\begin{lemma} \cite[Theorem 3.2.2]{Wang}
$$
\mathbf{A}=\left( \begin{array}{ll}
\mathbf{A}_{11} & \mathbf{A}_{12}\\
\mathbf{A}_{21} & \mathbf{A}_{22}
\end{array} \right)\succeq0, \mathbf{B}=\left( \begin{array}{ll}
\mathbf{B}_{11} & \mathbf{B}_{12}\\
\mathbf{B}_{21} & \mathbf{B}_{22}
\end{array} \right)\succeq0
$$
are two Hermitian matrices with the same dimensions. $\mathbf{A}_{11}\succ0$ and $\mathbf{B}_{11}\succ0$ also have the same dimensions. Then
$$
\frac{\det(\mathbf{A}+\mathbf{B})}{\det(\mathbf{A}_{11}+\mathbf{B}_{11})}\geq\frac{\det(\mathbf{A})}{\det(\mathbf{A}_{11})}
+\frac{\det(\mathbf{B})}{\det(\mathbf{B}_{11})}\geq\frac{\det(\mathbf{B})}{\det(\mathbf{B}_{11})}.
$$
So this derives:
$$
\frac{\det(\mathbf{A}+\mathbf{B})}{\det(\mathbf{B})}\geq\frac{\det(\mathbf{A}_{11}+\mathbf{B}_{11})}{\det(\mathbf{B}_{11})}.
$$
\end{lemma}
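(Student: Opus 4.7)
My plan is to reduce the lemma to classical matrix-analytic facts via the Schur-complement formalism. The ratios appearing in the claim are precisely determinants of Schur complements: defining $\mathbf{M}/\mathbf{M}_{11}:=\mathbf{M}_{22}-\mathbf{M}_{21}\mathbf{M}_{11}^{-1}\mathbf{M}_{12}$ for any block-partitioned $\mathbf{M}$ with $\mathbf{M}_{11}\succ0$, the block-determinant identity yields $\det(\mathbf{M})=\det(\mathbf{M}_{11})\det(\mathbf{M}/\mathbf{M}_{11})$. Applying this to $\mathbf{A}$, $\mathbf{B}$, and $\mathbf{A}+\mathbf{B}$, the first stated inequality is equivalent to $\det((\mathbf{A}+\mathbf{B})/(\mathbf{A}_{11}+\mathbf{B}_{11}))\geq\det(\mathbf{A}/\mathbf{A}_{11})+\det(\mathbf{B}/\mathbf{B}_{11})$, which decouples the block-structural content of the lemma from its scalar (determinant) content.

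The first step is to establish superadditivity of the Schur complement in the positive-semidefinite (PSD) order: $(\mathbf{A}+\mathbf{B})/(\mathbf{A}_{11}+\mathbf{B}_{11})\succeq\mathbf{A}/\mathbf{A}_{11}+\mathbf{B}/\mathbf{B}_{11}$. I would prove this via the variational identity that for any PSD matrix $\mathbf{M}$ with $\mathbf{M}_{11}\succ0$, $v^H(\mathbf{M}/\mathbf{M}_{11})v=\min_u\{u^H\mathbf{M}_{11}u+2\textrm{Re}(u^H\mathbf{M}_{12}v)+v^H\mathbf{M}_{22}v\}$, minimized at $u=-\mathbf{M}_{11}^{-1}\mathbf{M}_{12}v$. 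Applied to $\mathbf{A}+\mathbf{B}$, the scalar $v^H[(\mathbf{A}+\mathbf{B})/(\mathbf{A}_{11}+\mathbf{B}_{11})]v$ is the minimum over $u$ of a sum of two quadratic forms, one in $\mathbf{A}$ and one in $\mathbf{B}$; since the minimum of a sum is at least the sum of the individual minima (attained at possibly different $u$'s), it is bounded below by $v^H(\mathbf{A}/\mathbf{A}_{11})v+v^H(\mathbf{B}/\mathbf{B}_{11})v$, giving the desired PSD ordering.

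The second step applies monotonicity of $\det$ on PSD matrices to obtain $\det((\mathbf{A}+\mathbf{B})/(\mathbf{A}_{11}+\mathbf{B}_{11}))\geq\det(\mathbf{A}/\mathbf{A}_{11}+\mathbf{B}/\mathbf{B}_{11})$. The third step invokes Minkowski's determinant inequality: for PSD $d\times d$ matrices $\mathbf{X},\mathbf{Y}$, $\det(\mathbf{X}+\mathbf{Y})^{1/d}\geq\det(\mathbf{X})^{1/d}+\det(\mathbf{Y})^{1/d}$; raising to the $d$-th power and using $(a+b)^d\geq a^d+b^d$ for $a,b\geq0$ yields the weaker $\det(\mathbf{X}+\mathbf{Y})\geq\det(\mathbf{X})+\det(\mathbf{Y})$. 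Chaining the three steps and converting back via the Schur identity gives the first inequality; the second, $\det(\mathbf{A})/\det(\mathbf{A}_{11})+\det(\mathbf{B})/\det(\mathbf{B}_{11})\geq\det(\mathbf{B})/\det(\mathbf{B}_{11})$, is immediate from $\det(\mathbf{A})/\det(\mathbf{A}_{11})\geq0$ (since $\mathbf{A}\succeq0$ forces $\mathbf{A}/\mathbf{A}_{11}\succeq0$). Cross-multiplying then yields the derived bound $\det(\mathbf{A}+\mathbf{B})/\det\mathbf{B}\geq\det(\mathbf{A}_{11}+\mathbf{B}_{11})/\det\mathbf{B}_{11}$. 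The main obstacle is step 1, the Schur-complement superadditivity, since the remaining steps are direct invocations of standard determinant inequalities; care is needed there to justify interchanging a minimum over $u$ with a sum of quadratic forms in two different matrices.
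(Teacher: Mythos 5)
The paper does not actually prove this lemma: it is imported wholesale from the cited reference \cite[Theorem 3.2.2]{Wang}, so there is no in-paper argument to compare yours against. Your proof is correct and self-contained, which is arguably a service the paper does not provide. The normalization $\det(\mathbf{M})/\det(\mathbf{M}_{11})=\det(\mathbf{M}/\mathbf{M}_{11})$ correctly reduces the claim to $\det\bigl((\mathbf{A}+\mathbf{B})/(\mathbf{A}_{11}+\mathbf{B}_{11})\bigr)\geq\det(\mathbf{A}/\mathbf{A}_{11})+\det(\mathbf{B}/\mathbf{B}_{11})$; the superadditivity of the Schur complement in the PSD order via the variational identity $v^{H}(\mathbf{M}/\mathbf{M}_{11})v=\min_{u}\bigl[u^{H}\,v^{H}\bigr]\mathbf{M}\bigl[u;\,v\bigr]$ is the standard and cleanest route (the interchange you flag as the main obstacle is just $\min_{u}(f+g)\geq\min_{u}f+\min_{u}g$, which needs no further justification once $\mathbf{A}_{11},\mathbf{B}_{11}\succ0$ guarantee the minima are attained); and determinant monotonicity on the PSD cone plus the weak corollary of Minkowski's inequality, $\det(\mathbf{X}+\mathbf{Y})\geq\det(\mathbf{X})+\det(\mathbf{Y})$, finish the first chain. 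The second inequality is indeed just nonnegativity of $\det(\mathbf{A}/\mathbf{A}_{11})$. One cosmetic caveat: the final displayed consequence presupposes $\det(\mathbf{B})>0$, which the hypotheses $\mathbf{B}\succeq0$, $\mathbf{B}_{11}\succ0$ do not guarantee; your cross-multiplied form $\det(\mathbf{A}+\mathbf{B})\det(\mathbf{B}_{11})\geq\det(\mathbf{A}_{11}+\mathbf{B}_{11})\det(\mathbf{B})$ holds unconditionally, and in the paper's actual use (Appendix A) $\mathbf{B}_{k}\succ0$, so nothing is lost.
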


\begin{lemma} \cite[Theorem 6.8.1]{Wang}
Suppose $\mathbf{C},\mathbf{B}$ are two Hermitian matrices. $\mathbf{C}\succeq\mathbf{B}\succeq0$. Then the following inequality holds:
$$
\frac{\det(\mathbf{C})}{\det(\mathbf{B})}\geq\frac{\textrm{tr}(\mathbf{C})}{\textrm{tr}(\mathbf{B})}.
$$
\end{lemma}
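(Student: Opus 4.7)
My plan is to reduce the claim to a spectral inequality by a standard congruence transformation. Without loss of generality I will assume $\mathbf{B}\succ 0$, since if $\mathbf{B}$ is singular then $\det(\mathbf{B})=0$ and the claimed bound is vacuous. I then define $\mathbf{N}=\mathbf{B}^{-1/2}\mathbf{C}\mathbf{B}^{-1/2}$, which is Hermitian, and observe that the hypothesis $\mathbf{C}\succeq\mathbf{B}$ translates into $\mathbf{N}\succeq\mathbf{I}$, so every eigenvalue $\nu_{i}$ of $\mathbf{N}$ satisfies $\nu_{i}\geq 1$.

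In these variables both ratios take a convenient form. The determinant side is simply $\det(\mathbf{C})/\det(\mathbf{B})=\det(\mathbf{N})=\prod_{i}\nu_{i}$. For the trace side, writing $\textrm{tr}(\mathbf{C})=\textrm{tr}(\mathbf{B}^{1/2}\mathbf{N}\mathbf{B}^{1/2})=\textrm{tr}(\mathbf{B}\mathbf{N})$ gives $\textrm{tr}(\mathbf{C})/\textrm{tr}(\mathbf{B})=\textrm{tr}(\mathbf{B}\mathbf{N})/\textrm{tr}(\mathbf{B})$. The proof then turns on squeezing $\lambda_{\max}(\mathbf{N})$ between the two sides. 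On the one hand, because every $\nu_{i}\geq 1$, the product $\prod_{i}\nu_{i}$ is at least its largest factor, so $\det(\mathbf{N})\geq\lambda_{\max}(\mathbf{N})$. On the other hand, from $\mathbf{B}^{1/2}\mathbf{N}\mathbf{B}^{1/2}\preceq\lambda_{\max}(\mathbf{N})\,\mathbf{B}$, taking the trace of both sides yields $\textrm{tr}(\mathbf{B}\mathbf{N})\leq\lambda_{\max}(\mathbf{N})\,\textrm{tr}(\mathbf{B})$, i.e., $\textrm{tr}(\mathbf{B}\mathbf{N})/\textrm{tr}(\mathbf{B})\leq\lambda_{\max}(\mathbf{N})$. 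Chaining the two bounds through $\lambda_{\max}(\mathbf{N})$ delivers the inequality.

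The only step I expect to require real thought is recognizing that the congruence $\mathbf{C}\mapsto\mathbf{B}^{-1/2}\mathbf{C}\mathbf{B}^{-1/2}$ is the right change of variables, because it simultaneously converts the partial order hypothesis into the spectral condition $\nu_{i}\geq 1$ and collapses the determinant ratio into a plain determinant, making $\lambda_{\max}(\mathbf{N})$ the natural pivot. Once that reformulation is in place the remaining estimates are one-line consequences of standard spectral facts for PSD matrices, so there is no significant computational obstacle.
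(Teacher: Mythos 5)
Your proof is correct. Note that the paper itself gives no proof of this lemma at all: it is quoted verbatim as Theorem 6.8.1 of the cited matrix-inequalities book \cite{Wang}, so there is nothing in the paper to compare against line by line; what you have produced is a valid self-contained replacement for that citation. The argument is sound at every step: the reduction to $\mathbf{B}\succ0$ is legitimate (if $\mathbf{B}$ is singular the ratio $\det(\mathbf{C})/\det(\mathbf{B})$ is not even defined, and $\textrm{tr}(\mathbf{B})=0$ only in the trivial case $\mathbf{B}=\mathbf{0}$); the congruence $\mathbf{N}=\mathbf{B}^{-1/2}\mathbf{C}\mathbf{B}^{-1/2}$ correctly converts $\mathbf{C}\succeq\mathbf{B}$ into $\mathbf{N}\succeq\mathbf{I}$; the bound $\det(\mathbf{N})=\prod_i\nu_i\geq\lambda_{\max}(\mathbf{N})$ uses exactly the fact that every factor is at least $1$; and $\textrm{tr}(\mathbf{C})=\textrm{tr}(\mathbf{B}\mathbf{N})\leq\lambda_{\max}(\mathbf{N})\,\textrm{tr}(\mathbf{B})$ follows from monotonicity of the trace under the Loewner order after conjugating $\mathbf{N}\preceq\lambda_{\max}(\mathbf{N})\mathbf{I}$ by $\mathbf{B}^{1/2}$. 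Pivoting both sides through $\lambda_{\max}(\mathbf{N})$ is a clean and arguably more transparent device than one typically finds in textbook treatments of such determinant--trace inequalities, and it makes visible where the hypothesis $\mathbf{C}\succeq\mathbf{B}$ (rather than mere positive semidefiniteness of both) is actually used.
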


Let $\mathbf{A}_{k}=\mathbf{T}_{kk}\mathbf{T}_{kk}^{H}$, $\mathbf{B}_{k}=\sum_{q\in\mathcal {K},q\neq k}\mathbf{T}_{kq}
\mathbf{T}_{kq}^{H}+\sigma_{1}^{2}\sum_{r\in\mathcal {R}}\bar{\mathbf{H}}_{kr}\bar{\mathbf{H}}_{kr}^{H}+\sigma_{2}^{2}\mathbf{I}_{N}$ and $\mathbf{C}_{k}=
\mathbf{A}_{k}+\mathbf{B}_{k},k\in\mathcal {K}$. Then we have
$$
1+\textrm{TSTINR}=1+\frac{\sum_{k\in\mathcal {K}}\mathbf{V}_{k}^{H}\mathbf{A}_{k}\mathbf{V}_{k}}
{\sum_{k\in\mathcal {K}}\mathbf{V}_{k}^{H}\mathbf{B}_{k}\mathbf{V}_{k}}=\frac{\sum_{k\in\mathcal {K}}\mathbf{V}_{k}^{H}\mathbf{C}_{k}\mathbf{V}_{k}}
{\sum_{k\in\mathcal {K}}\mathbf{V}_{k}^{H}\mathbf{B}_{k}\mathbf{V}_{k}}\textrm{ and }
R_{\textrm{sum}}=\sum_{k\in\mathcal{K}}\textrm{log}_{2}\det(\mathbf{B}_{k}^{-1}\mathbf{C}_{k}).
$$
From the definitions we conclude that $\mathbf{B}_{k}\succ0$, $\mathbf{C}_{k}\succ0$ and $\mathbf{C}_{k}\succeq\mathbf{B}_{k},k\in\mathcal {K}$.

1. We prove that for any $k\in\mathcal {K}$,
\begin{eqnarray} \label{eq:eq23}
\det(\mathbf{B}_{k}^{-1}\mathbf{C}_{k})\geq\det[(\mathbf{V}_{k}^{H}\mathbf{B}_{k}\mathbf{V}_{k})^{-1}(\mathbf{V}_{k}^{H}\mathbf{C}_{k}\mathbf{V}_{k})].
\end{eqnarray}
For simplicity we omit the subscript $k$. Let $\mathbf{V}_{\perp}\in\mathbb{C}^{N\times(N-d)}$ be the bases of the complementary subspace of the subspace
spanned by the columns of
$\mathbf{V}$. That is, $\mathbf{Q}=[\mathbf{V},\mathbf{V}_{\perp}]\in\mathbb{C}^{N\times N}$ is a unitary matrix. Then we deduce that
$$
\det(\mathbf{B}^{-1}\mathbf{C})=\frac{\det(\mathbf{Q}^{H}\mathbf{C}\mathbf{Q})}{\det(\mathbf{Q}^{H}\mathbf{B}\mathbf{Q})}\geq
\frac{\det(\mathbf{V}^{H}\mathbf{C}\mathbf{V})}{\det(\mathbf{V}^{H}\mathbf{B}\mathbf{V})}.
$$
Here the equality is deduced from the property of unitary matrix $\mathbf{Q}$. And the inequality comes from Lemma 1,
$\mathbf{C}=\mathbf{A}+\mathbf{B}$ and that
$$
\mathbf{Q}^{H}\mathbf{C}\mathbf{Q}=\left( \begin{array}{ll}
\mathbf{V}^{H}\mathbf{C}\mathbf{V} & \mathbf{V}^{H}\mathbf{C}\mathbf{V}_{\perp}\\
\mathbf{V}_{\perp}^{H}\mathbf{C}\mathbf{V} & \mathbf{V}_{\perp}^{H}\mathbf{C}\mathbf{V}_{\perp}
\end{array} \right).
$$

2. $\mathbf{C}_{k}\succeq\mathbf{B}_{k}$ induces $\mathbf{V}_{k}^{H}\mathbf{C}_{k}\mathbf{V}_{k}\succeq\mathbf{V}_{k}^{H}\mathbf{B}_{k}\mathbf{V}_{k}$.
Lemma 2 shows that
\begin{eqnarray} \label{eq:eq25}
\frac{\det(\mathbf{V}_{k}^{H}\mathbf{C}_{k}\mathbf{V}_{k})}{\det(\mathbf{V}_{k}^{H}\mathbf{B}_{k}\mathbf{V}_{k})}\geq
\frac{\textrm{tr}(\mathbf{V}_{k}^{H}\mathbf{C}_{k}\mathbf{V}_{k})}
{\textrm{tr}(\mathbf{V}_{k}^{H}\mathbf{B}_{k}\mathbf{V}_{k})}.
\end{eqnarray}
From (\ref{eq:eq23}) and (\ref{eq:eq25}), it is concluded that
\begin{eqnarray} \label{eq:eq26}
R_{\textrm{sum}}=\sum_{k\in\mathcal{K}}\textrm{log}_{2}\det(\mathbf{B}_{k}^{-1}\mathbf{C}_{k})\geq\sum_{k\in\mathcal{K}}\textrm{log}_{2}
\det[(\mathbf{V}_{k}^{H}\mathbf{B}_{k}\mathbf{V}_{k})^{-1}(\mathbf{V}_{k}^{H}\mathbf{C}_{k}\mathbf{V}_{k})]\geq
\sum_{k\in\mathcal{K}}\textrm{log}_{2}\frac{\textrm{tr}(\mathbf{V}_{k}^{H}\mathbf{C}_{k}\mathbf{V}_{k})}
{\textrm{tr}(\mathbf{V}_{k}^{H}\mathbf{B}_{k}\mathbf{V}_{k})}.
\end{eqnarray}

3. Finally we prove that:
\begin{eqnarray} \label{eq:eq22}
\sum_{k\in\mathcal{K}}\textrm{log}_{2}\frac{\textrm{tr}(\mathbf{V}_{k}^{H}\mathbf{C}_{k}\mathbf{V}_{k})}
{\textrm{tr}(\mathbf{V}_{k}^{H}\mathbf{B}_{k}\mathbf{V}_{k})}\geq\textrm{log}_{2}\frac{\sum_{k\in\mathcal{K}}\textrm{tr}(\mathbf{V}_{k}^{H}\mathbf{C}_{k}
\mathbf{V}_{k})}{\sum_{k\in\mathcal{K}}\textrm{tr}(\mathbf{V}_{k}^{H}\mathbf{B}_{k}\mathbf{V}_{k})}.
\end{eqnarray}
With any scalar $t_{k}\geq1$ and the fact that $\textrm{tr}(\mathbf{V}_{k}^{H}\mathbf{B}_{k}\mathbf{V}_{k})\geq0, k\in\mathcal{K}$, it is deduced that:
$$
(\prod_{k\in\mathcal{K}}t_{k})\sum_{k\in\mathcal{K}}\textrm{tr}(\mathbf{V}_{k}^{H}\mathbf{B}_{k}\mathbf{V}_{k})\geq
\sum_{k\in\mathcal{K}}t_{k}\textrm{tr}(\mathbf{V}_{k}^{H}\mathbf{B}_{k}\mathbf{V}_{k}).
$$
Let $t_{k}=\frac{\textrm{tr}(\mathbf{V}_{k}^{H}\mathbf{C}_{k}\mathbf{V}_{k})}
{\textrm{tr}(\mathbf{V}_{k}^{H}\mathbf{B}_{k}\mathbf{V}_{k})}, k\in\mathcal{K}$, divide $\sum_{k\in\mathcal{K}}\textrm{tr}(\mathbf{V}_{k}^{H}\mathbf{B}_{k}
\mathbf{V}_{k})>0$ and take logarithm for both sides, and thus we have (\ref{eq:eq22}). Combining (\ref{eq:eq26}) and (\ref{eq:eq22}) we prove Theorem \ref{th:th1}.

\subsection{Proof of Theorem \ref{th:th2}}

Let $\{\mathbf{X}\}$ represent the set of the iterative points $\{\{\mathbf{U}\},\{\mathbf{V}\},\{\mathbf{W}\}\}$. Suppose $\{\mathbf{X}^{i}\}$ are the
feasible points achieved from the $i$th iteration. Define $P^{I+N}=P^{I}+P^{N}$. Then the expression of parameter $C$ used in the $i$th iteration as well as the TSTINR achieved in the $(i-1)$th iteration is:
$$
C^{i}=\textrm{TSTINR}^{i-1}=\frac{P^{S}(\{\mathbf{X}^{i-1}\})}
{P^{I+N}(\{\mathbf{X}^{i-1}\})}.
$$

As in the $i$th iteration there is sufficient reduction of (\ref{eq:eq6.1}), it holds that
$$
f(\{\mathbf{X}^{i}\};C^{i})=C^{i}P^{I+N}(\{\mathbf{X}^{i}\})-P^{S}(\{\mathbf{X}^{i}\})\leq f(\{\mathbf{X}^{i-1}\};C^{i})=0,
$$
Then $\textrm{TSTINR}^{i}=\frac{P^{S}(\{\mathbf{X}^{i}\})}
{P^{I+N}(\{\mathbf{X}^{i}\})}\geq C^{i}=\textrm{TSTINR}^{i-1}$. Thus the value of TSTINR increases monotonically.

Suppose $\{\mathbf{X}^{*}\}\triangleq\{\{\mathbf{U}^{*}\},\{\mathbf{V}^{*}\},\{\mathbf{W}^{*}\}\}$ are the stationary points of (\ref{eq:eq6})
and $\lambda\in\mathbb{R}$ is the Lagrange multiplier of (\ref{eq:eq6.4}): $h(\{\mathbf{X}\})
=\sum_{r\in\mathcal {R}}(\sum_{k\in\mathcal {K}}\|\mathbf{W}_{r}\mathbf{G}_{rk}\mathbf{U}_{k}\|_{F}^{2}
+\sigma_{1}^{2}\|\mathbf{W}_{r}\|_{F}^{2})-p_{\max}^{R}=0$. Then the first order optimality conditions of the problem (\ref{eq:eq6}) with respect to $\mathbf{W}_{r}$ are:
\begin{eqnarray}
C\frac{\partial P^{I+N}(\{\mathbf{X}^{*}\})}{\partial\mathbf{W}_{r}}-\frac{\partial P^{S}(\{\mathbf{X}^{*}\})}{\partial\mathbf{W}_{r}}
-\lambda\frac{\partial h(\{\mathbf{X}^{*}\})}{\partial\mathbf{W}_{r}}=\mathbf{0};\label{eq:eq27}\\
h(\{\mathbf{X}^{*}\})=0.\label{eq:eq36}
\end{eqnarray}
When the iterative points converges to $\{\mathbf{X}^{*}\}$, we have $C=\frac{P^{S}(\{\mathbf{X}^{*}\})}{P^{I+N}(\{\mathbf{X}^{*}\})}$. Taking it into (\ref{eq:eq27}), and let
$\tilde{\lambda}=-\frac{1}{P^{I+N}(\{\mathbf{X}^{*}\})}\lambda$. Then we have
\begin{eqnarray}  \label{eq:eq37}
\frac{1}{[P^{I+N}(\{\mathbf{X}^{*}\})]^2}\left[P^{I+N}(\{\mathbf{X}^{*}\})\frac{\partial P^{S}(\{\mathbf{X}^{*}\})}{\partial\mathbf{W}_{r}}
-P^{S}(\{\mathbf{X}^{*}\})\frac{\partial P^{I+N}(\{\mathbf{X}^{*}\})}{\partial\mathbf{W}_{r}}\right]
-\tilde{\lambda}\frac{\partial h(\{\mathbf{X}^{*}\})}{\partial\mathbf{W}_{r}}=\mathbf{0}.
\end{eqnarray}
With $\tilde{\lambda}$ as the Lagrange multiplier of (\ref{eq:eq4.1}), (\ref{eq:eq37}) and (\ref{eq:eq36}) consist of the first order optimality conditions of (\ref{eq:eq4}) with respect to $\mathbf{W}_{r}$. Similarly, we are able to achieve the first order optimality
 conditions of (\ref{eq:eq4}) with respect to other variables. Thus $\{\mathbf{X}^{*}\}$ are also the stationary points of (\ref{eq:eq4}).

 \subsection{Proof of Theorem \ref{th:th4}}
Similar to Theorem \ref{th:th5} for problem (\ref{eq:eq35}), the optimality conditions for (\ref{eq:eq40}) are as follows:\\
1. $\mathbf{X}^{*}(\theta^{*}),\boldsymbol{\Phi}_{k}^{*}(\theta^{*})$ are the optimal solutions for the problem below:
\begin{eqnarray} \label{eq:eq8}
&\displaystyle\min_{\mathbf{X}\in\mathbb{C}^{M_{k}\times d_{k}},\boldsymbol{\Phi}_{k}\in\mathbb{C}^{d_{k}\times d_{k}}}& \textrm{tr}[\mathbf{X}^{H}(\mathbf{Q}_{k}+\theta^{*}\mathbf{L}_{k})\mathbf{X}]\nonumber\\
&\textrm{s.t.}& \mathbf{X}^{H}\mathbf{X}=\boldsymbol{\Phi}_{k},\nonumber\\
&&\textrm{tr}(\boldsymbol{\Phi}_{k})\leq p_{0}^{T}, \boldsymbol{\Phi}_{k}\hspace{0.1cm}\textrm{is diagonal}, \boldsymbol{\Phi}_{k}\succeq0.
\end{eqnarray}
2. Complimentary condition holds: $\theta^{*}\{\textrm{tr}[(\mathbf{X}^{*})^{H}\mathbf{L}_{k}\mathbf{X}^{*}]-\eta_{2}\}=0$.\\
3. $c(\theta^{*})$ satisfies (\ref{eq:eq40.1}): $c(\theta^{*})=\textrm{tr}\{[\mathbf{X}^{*}(\theta^{*})]^{H}\mathbf{L}_{k}\mathbf{X}^{*}\}\leq \eta_{2}$.

Here $\theta^{*}\geq0$ is the optimal Lagrange multiplier of (\ref{eq:eq40.1}). With $\theta^{*}$ we are able to obtain the optimal $\mathbf{X}^{*}$ and $\boldsymbol{\Phi}_{k}^{*}$ of (\ref{eq:eq40}).
If the $i$th diagonal element of $\boldsymbol{\Phi}_{k}$ is zero, then the $i$th column of $\mathbf{X}$ should be $\mathbf{0}$. In this situation we can delete this column and optimize the remaining ones. Without loss of generality we assume each element of the diagonal of $\boldsymbol{\Phi}_{k}$ is strictly positive, that is, $\boldsymbol{\Phi}_{k}\succ0$. Let $\mathbf{R}=\mathbf{Q}_{k}+\mu^{*}\mathbf{L}_{k}$, and $\mathbf{Y}=\mathbf{X}\boldsymbol{\Phi}_{k}^{-\frac{1}{2}}$. Omit the index $k$ for simplicity.
Problem (\ref{eq:eq8}) is equivalent to the following problem:
\begin{subequations}\label{eq:eq30}
\begin{eqnarray}
&\displaystyle\min_{\mathbf{Y}\in\mathbb{C}^{M\times d},\boldsymbol{\Phi}\in\mathbb{C}^{d\times d}}& \textrm{tr}(\mathbf{Y}^{H}\mathbf{RY}\boldsymbol{\Phi})\\
&\textrm{s.t.}& \mathbf{Y}^{H}\mathbf{Y}=\mathbf{I}_{d},\label{eq:eq30.1}\\
&&\textrm{tr}(\boldsymbol{\Phi})\leq p_{0}^{T},\boldsymbol{\Phi}\hspace{0.1cm}\textrm{is diagonal}, \boldsymbol{\Phi}\succeq0.
\end{eqnarray}
\end{subequations}
In the following we analyze the optimal solutions of (\ref{eq:eq30}) to show the property of $\mathbf{Y}^{*}$ and $\boldsymbol{\Phi}^{*}$. Here we treat the columns of $\mathbf{Y}$ as the linear combination of the eigenvectors of $\mathbf{R}$. Let $t_{1}\leq t_{2} \leq\ldots\leq t_{M}$ be the eigenvalues of $\mathbf{R}$. Suppose $\mathbf{y}_{i}$, the $i$th column of $\mathbf{Y}$ is the linear combination of the eigenvectors corresponding to the eigenvalues $\{t_{j},j\in\Omega_{i}\}$, where $\Omega_{i}\subseteq\{1,2,\ldots,M\}$. Then according to (\ref{eq:eq30.1}), different columns correspond to different eigenvalues of $\mathbf{R}$, that is, $\cap_{i=1}^{d}\Omega_{i}=\emptyset$. Then the feasible set is divided into $d$ independent parts with $\mathbf{y}_{i}$ as variables, for $i=1,2,\ldots,d$, respectively. Define $\Phi_{ii}$ as the $i$th diagonal element of $\boldsymbol{\Phi}$. The objective function of (\ref{eq:eq30}) is rewritten as $\sum_{i=1}^{d}\Phi_{ii}\mathbf{y}_{i}^{H}\mathbf{Ry}_{i}$. For any fixed feasible $\boldsymbol{\Phi}\succeq0$, the optimal solution of (\ref{eq:eq30}), $\mathbf{y}_{i}^{*}$, is the eigenvector of matrix $\mathbf{R}$ corresponding to the eigenvalue $\bar{t}_{i}\triangleq\min\{t_{j},j\in\Omega_{i}\}$, as long as $\Phi_{ii}>0$, for any $i=1,2,\ldots,d$. Then taking these solutions back into (\ref{eq:eq30}), the problem becomes:
\begin{eqnarray} \label{eq:eq41}
&\displaystyle\min_{\bar{t}_{i},\Phi_{ii},i=1,\ldots,d}& \sum_{i=1}^{d}\Phi_{ii}\bar{t}_{i}\nonumber\\
&\textrm{s.t.}& \Phi_{ii}\geq0, i=1,\ldots,d,\nonumber\\
&& \sum_{i=1}^{d}\Phi_{ii}\leq p_{0}^{T},\nonumber\\
&& \bar{t}_{i}\hspace{0.1cm}\textrm{is an eigenvalue of $\mathbf{R}$}, \bar{t}_{i}\neq\bar{t}_{j}, i,j=1,\ldots,d.
\end{eqnarray}
As the objective function of (\ref{eq:eq41}) is linear in $\Phi_{ii}$ for any $i=1,\ldots,d$, the optimal solutions of (\ref{eq:eq41}) should be
$\Phi_{11}^{*}=p_{0}^{T}, \Phi_{ii}^{*}=0, i=2,\ldots,d$ and $\bar{t}_{1}^{*}=t_{1}$.
Thus, the optimal solution $\boldsymbol{\Phi}_{k}^{*}$ of (\ref{eq:eq40}) is of rank one.

\end{document}